\newtheorem{theorem}{Theorem}[section]
\newtheorem{lemma}[theorem]{Lemma}
\newcommand{\sq}{\hbox{\rlap{$\sqcap$}$\sqcup$}}
\newcommand{\qed}{\hspace*{\fill}\sq}
\newcommand{\set}[1]{\left\{#1\right\}}
\newcommand{\defn}[1]{\emph{\textbf{#1}}}
\newcommand{\card}[1]{\left|#1\right|}
\newenvironment{proof}{\noindent {\bf Proof.}\ }{\qed\par\vskip 4mm\par}
\begin{document}

\begin{titlepage}

\title{
Greedy Sequential Maximal Independent Set and Matching\\
are Parallel on Average
}
\author{Guy E. Blelloch\\
Carnegie Mellon University\\
guyb@cs.cmu.edu\\
\and
Jeremy T. Fineman\\
Georgetown University\\
jfineman@cs.georgetown.edu \\
\and 
Julian Shun\\
Carnegie Mellon University\\
jshun@cs.cmu.edu
}

\date{}

\maketitle \thispagestyle{empty}


\begin{abstract}
  The greedy sequential algorithm for maximal independent set (MIS)
  loops over the vertices in arbitrary order adding a vertex to the
  resulting set if and only if no previous neighboring vertex has been
  added.  In this loop, as in many sequential loops, each iterate will
  only depend on a subset of the previous iterates (i.e. knowing that
  any one of a vertex's neighbors is in the MIS, or knowing that it
  has no previous neighbors, is sufficient to decide its fate one way
  or the other).  This leads to a dependence structure among the
  iterates.  If this structure is shallow then running the iterates in
  parallel while respecting the dependencies can lead to an efficient
  parallel implementation mimicking the sequential algorithm.

  In this paper, we show that for any graph, and for a random ordering
  of the vertices, the dependence length of the sequential greedy MIS
  algorithm is polylogarithmic ($O(\log^2 n)$ with high probability).  Our results
  extend previous results that show polylogarithmic bounds only for
  random graphs.  We show similar results for a greedy maximal
  matching (MM).  For both problems we describe simple linear work
  parallel algorithms based on the approach.  The algorithms allow for
  a smooth tradeoff between more parallelism and reduced work, but
  always return the same result as the sequential greedy algorithms.
  We present experimental results that demonstrate efficiency and the
  tradeoff between work and parallelism.
\end{abstract}

\bigskip

\centerline{{\bf Keywords}: Parallel algorithms, maximal independent set, maximal matching}

\end{titlepage}

\section{Introduction}

The \defn{maximal independent set} (MIS) problem is given an undirected graph
$G = (V,E)$ to return a subset $U\subseteq V$ such that no vertices in
$U$ are neighbors of each other (independent set), and all vertices in
$V \setminus U$ have a neighbor in $U$ (maximal).  The MIS is a fundamental
problem in parallel algorithms with many applications~\cite{Luby86}.
For example if the vertices represent tasks and each edge represents
the constraint that two tasks cannot run in parallel, the MIS finds a maximal set
of tasks to run in parallel.
Parallel algorithms for the problem have been well
studied~\cite{KW84,Luby86,AlonBaIt86,Goldberg86,GoldbergPlSh87,GoldbergSp89a,GoldbergSp89b,CoppersmithRaTo89,CalkinFr90}.
Luby's randomized algorithm~\cite{Luby86}, for example, runs in
$O(\log |V|)$ time on $O(|E|)$ processors of an CRCW PRAM and
can be converted to run in linear work.  The problem, however, is that
on a modest number of processors it is very hard for these parallel
algorithms to outperform the very simple and fast sequential greedy
algorithm.  Furthemore the parallel algorithms give different results
than the sequential algorithm.  This can be undesirable in a context
where one wants to choose between the algorithms based on platform but
wants deterministic answers.

In this paper we show that, perhaps surprisingly, a trivial
parallelization of the sequential greedy algorithm is in fact highly
parallel (polylogarithmic time) when the order of vertices is randomized.
In particular simply by processing each vertex as in the sequential
algorithm but as soon as it has no earlier neighbors gives a parallel
linear work algorithm.  The MIS returned by the sequential greedy
algorithm, and hence also its parallelization, is referred to as the
\defn{lexicographically first} MIS~\cite{Cook85}.  In a general
undirected graph and an arbitrary order, the problem of finding a
lexicographically first MIS is
P-complete~\cite{Cook85,GreenlawHoRu95}, meaning that it is unlikely
that any efficient low-depth parallel algorithm exists for this
problem.\footnote{ Cook~\cite{Cook85} shows this for problem of
  lexicographically first maximal clique, which is equivalent to
  finding the MIS on the complement graph.}  Moreover, it is even
P-complete to approximate the size of the lexicographically first
MIS~\cite{GreenlawHoRu95}.  Our results show that for any graph and
for the vast majority of orders the lexicographically first MIS has
polylogarithmic depth.

Beyond theoretical interest the result has important practical
implications.  Firstly it allows for a very simple and efficient
parallel implementation of MIS that can trade off work with depth.
The approach is instead of processing all vertices in parallel to
process prefixes of the vertex ordering in parallel.  Using smaller
prefixes reduces parallelism but also reduces redundant work.  The
limit of a prefix of size one gives the sequential algorithm with no
redundant work.  We show bounds on prefix size that guarantee linear
work.  The second implication is that once an ordering is fixed, the
approach guarantees the same result whether run in parallel or
sequentially or, in fact, choosing any schedule of the iterations that
respects the dependences.  Such determinism can be an important
property of parallel algorithms~\cite{Bocchino09,BFGS12}.

Our results generalize the work of Coppersmith et
al.~\cite{CoppersmithRaTo89} (CRT) and Calkin and
Frieze~\cite{CalkinFr90} (CF).  CRT provide a greedy parallel
algorithm for finding a lexicographically first MIS for a random graph
$G_{n,p}$, $0 \leq p \leq 1$, where there are $n$ vertices and the
probability that an edge exists between any two vertices is $p$.  It
runs in $O(\log^2n / \log\log n)$ expected depth on a linear number of
processors.  CF give a tighter analysis showing that this algorithm
runs in $O(\log n)$ expected depth.  They rely heavily on the fact
that edges in a random graph are uncorrelated, which is not the case
for general graphs so their results do not extend to our context.  We
however use a similar approach of analyzing prefixes of the sequential
ordering.

The \defn{maximal matching} (MM) problem is given an undirected graph $G =
(V,E)$ to return a subset $E'\subseteq E$ such that no edges in $E'$
share an endpoint, and all edges in $E \setminus E'$ have a neighboring edge
in $E'$.  The MM of $G$ can be solved by finding an MIS of its line
graph (the graph representing adjacencies of edges in $G$), but the
line graph can be asymptotically larger than $G$.  Instead, the
efficient (linear time) sequential greedy algorithm goes through the
edges in an arbitrary order adding an edge if no adjacent edge has
already been added.  As with MIS this is naturally parallelized by
adding in parallel all edges that have no earlier neighboring edges.
Our results on MIS directly imply that this algorithm has
polylogarithmic depth for random edge ordering with high probability.  We also show
how to make the algorithm linear work, which requires modifications to
the MIS algorithm.  Previous work has also shown polylogarithmic depth
and linear work algorithms for the MM
problem~\cite{IsraeliSh86,IsraeliIt86} but as with MIS ours approach
returns the same results as the sequential algorithm and leads to very
efficient code.

Our experiments show how the choice of prefix size affects total work performed, parallelism, and overall running time. With a careful choice of prefix size, our algorithms indeed achieve very good speed-up (14--24x on 32 processors) and require only a modest number of processors to outperform optimized sequential implementations. Our efficient implementation of Luby's algorithm requires many more processors to outperform its sequential counterpart. Our prefix-based MIS algorithm is always 4--8 times faster than the implementation of Luby's algorithm, since our prefix-based algorithm performs less work in practice.

\section{Notation and Preliminaries}
Throughout the paper, we use $n$ and $m$ to refer to the number of
vertices and edges, respectively, in the graph. For a graph $G =
(V,E)$ we use $N_G(V)$ (or $N(V)$ when clear) to denote the set of all
neighbors of vertices in $V$, and $N_G(E)$ the neighboring edges of
$E$ (ones that share a vertex).  A maximal independent set $U \subset
V$ is thus one that satisfies $N_G(U) \cap U = \emptyset$ and $N_G(U)
\cup U = V$, and a maximal matching $E'$ is
one that satisfies $N(E') \cap E' = \emptyset$ and $N(E') \cup E' =
E$. We use $N(v)$ as a shorthand for $N(\set{v})$ when $v$ is a single
vertex.  We use $G[U]$ to denote the \defn{vertex-induced subgraph} of
$G$ by vertex set $U$, i.e., $G[U]$ contains all vertices in $U$ along
with edges with both endpoints in $U$.  We use $G[E']$ to denote the
\defn{edge-induced subgraph} of $G$, i.e., $G[E']$ contains all edges
$E'$ along with the incident vertices of $G$.

Throughout this paper, we use the concurrent-read concurrent-write
(CRCW) parallel random access machine (PRAM) model for analyzing
algorithms.  We assume arbitrary write version.  Our results are
stated in the work-depth model where work is equal to the number of
operations (equivalently the product of the time and processors) and
depth is equal to the number of time steps.

\newcommand{\pdag}{priority DAG}
\newcommand{\ddepth}{dependence length}
\newtheorem{corollary}[theorem]{Corollary}

\algblock{ParFor}{EndParFor}
\algnewcommand\algorithmicparfor{\textbf{parfor}}
\algnewcommand\algorithmicpardo{\textbf{do}}
\algnewcommand\algorithmicendparfor{\textbf{end\ parfor}}
\algrenewtext{ParFor}[1]{\algorithmicparfor\ #1\ \algorithmicpardo}
\algrenewtext{EndParFor}{\algorithmicendparfor}

\section{Maximal independent set}


The sequential algorithm for computing the MIS of a graph is a simple
greedy algorithm, shown in Algorithm~\ref{alg:sequential}.  In
addition to a graph $G$ the algorithm takes an arbitrary total
ordering on the vertices $\pi$.  We also refer to $\pi$ as priorities
on the vertices.  The algorithm adds the first remaining vertex $v$
according to $\pi$ to the MIS and then removes $v$ and all of $v$'s
neighbors from the graph, repeating until the graph is empty.  The MIS
returned by this sequential algorithm is defined as the
lexicographically first MIS for $G$ according to $\pi$.  

\begin{algorithm} 
\caption{Sequential greedy algorithm for maximal independent set} \label{alg:sequential}
\begin{algorithmic}[1]
\Procedure{Sequential-Greedy-MIS}{$G = (V,E)$, $\pi$ }
\If{ $|V| = 0$ } \Return $\O$
\Else \State let $v$ be the first vertex in $V$ by the ordering $\pi$
\State $V' = V \setminus (v \cup N(v))$
\State \Return $v$ $\cup$ \Call{Sequential-Greedy-MIS}{$G[V']$, $\pi$ }
\EndIf
\EndProcedure
\end{algorithmic}
\end{algorithm}

\vspace{-0.1in}

\begin{algorithm}
\caption{Parallel greedy algorithm for maximal independent set} \label{alg:fully-parallel}
\begin{algorithmic}[1]
\Procedure{Parallel-greedy-MIS}{$G = (V,E)$, $\pi$ }
\If{ $|V| = 0$ } \Return $\O$
\Else
\State let $W$ be the set of vertices in $V$ with no earlier neighbors (based on $\pi$)
\State $V' = V \setminus (W \cup  N(W))$
\State \Return $W$ $\cup$ \Call{Parallel-greedy-MIS}{ $G[V']$, $\pi$ }
\EndIf
\EndProcedure
\end{algorithmic}
\end{algorithm}

By allowing vertices to be added to the MIS as soon as they have no
earlier neighbor we get the parallel
Algorithm~\ref{alg:fully-parallel}.  It is not difficult to see that
this algorithm returns the same MIS as the sequential algorithm.  A
simple proof proceeds by induction on vertices in order.  (A vertex
$v$ may only be resolved when all of its earlier neighbors have been
classified.  If its earlier neighbors match the sequential algorithm,
then it does too.)  Naturally, the parallel algorithm may (and should,
if there is to be any speedup) accept some vertices into the MIS at an
earlier time than the sequential algorithm but the final set produced
is the same.  

We also note that if Algorithm~\ref{alg:fully-parallel} regenerates
the ordering $\pi$ randomly on each recursive call then the algorithm
is effectively the same as Luby's Algorithm A~\cite{Luby86}.  It is
the fact that we use the same permutation which makes this
Algorithm~\ref{alg:fully-parallel} more difficult to analyze.

\paragraph{The priority DAG.}
A perhaps more intuitive way to view this algorithm is in terms of a
directed acyclic graph (DAG) over the input vertices where edges are
directed from higher priority to lower priority endpoints based on
$\pi$.  We call this DAG the \defn{\pdag{}}. We refer to each
recursive call of Algorithm~\ref{alg:fully-parallel} as a
\defn{step}.  Each step adds the roots\footnote{We use the term
``root'' to refer to those nodes in a DAG with no incoming edges.} of
the \pdag{} to the MIS and removes them and their children from the
\pdag{}. This process continues until no vertices remain.
We define the number of iterations to remove all vertices from the
\pdag{} (equivalently, the number of recursive calls in
Algorithm~\ref{alg:fully-parallel}) as its \defn{\ddepth{}}.  The
\ddepth{} is upper bounded by the longest directed path in the \pdag{},
but in general could be significantly less.  Indeed for a complete
graph the longest directed path in the \pdag{} is $\Omega(n)$, but the
\ddepth{} is $O(1)$.

The main goal of this section is to show that the \ddepth{} is
polylogarithmic for most orderings~$\pi$.  Instead of arguing this
fact directly, we consider \pdag{}s induced by subsets of vertices and
show that these have small longest paths and hence small \ddepth{}.
Aggregating across all sub-DAGs gives an upper bound on the total
\ddepth{}.

\subsubsection*{Analysis via modified parallel algorithm}
Analyzing the depth of Algorithm~\ref{alg:fully-parallel} directly
seems difficult as once some vertices are removed, the ordering among the
set of remaining vertices may not be uniformly random.  Rather than
analyzing the algorithm directly, we preserve sufficient independence
over priorities by adopting an analysis framework similar
to~\cite{CoppersmithRaTo89,CalkinFr90}.  Specifically, for the purpose
of analysis, we consider a more restricted, less parallel algorithm
given by Algorithm~\ref{alg:prefix-parallel}.  

\begin{algorithm}
\caption{Modified parallel greedy algorithm for maximal independent set} \label{alg:prefix-parallel}
\begin{algorithmic}[1]
\Procedure{Modified-Parallel-MIS}{ $G=(V,E)$, $\pi$ }
\If{ $|V| = 0$ } \Return $\O$
\Else
\State choose prefix-size parameter $\delta$ \Comment{$\delta$ may be a function of
  $G$} 
\State let $P = P(V,\pi,\delta)$ be the vertices in the prefix
\State $W = $ \Call{Parallel-greedy-MIS}{ $G[P]$, $\pi$ }
\State $V' = V \setminus (P \cup N(W))$
\State \Return $W$ $\cup$ \Call{Modified-Parallel-MIS}{ $G[V']$, $\pi$ }
\EndIf
\EndProcedure
\end{algorithmic}
\end{algorithm}

Algorithm~\ref{alg:prefix-parallel} differs from
Algorithm~\ref{alg:fully-parallel} in that it considers only a prefix
of the remaining vertices rather than considering all vertices in
parallel.  This modification may cause some vertices to be processed
later than they would in Algorithm~\ref{alg:fully-parallel}, which can
only \emph{increase} the total number of steps of the algorithm.  We
will show that Algorithm~\ref{alg:prefix-parallel} has a
polylogarithmic number of steps, and hence
Algorithm~\ref{alg:fully-parallel} also does.

We refer to each iteration (recursive call) of
Algorithm~\ref{alg:prefix-parallel} as a \defn{round}.  For an ordered
set $V$ of vertices and fraction $0<\delta\leq 1$, we define the
\defn{$\delta$-prefix} of $V$, denoted by $P(V,\pi,\delta)$, to be the
subset of vertices corresponding to the $\delta\card{V}$ earliest in
the ordering $\pi$.  During each round, the algorithm selects the
$\delta$-prefix of remaining vertices for some value of $\delta$ to be
discussed later.  An MIS is then computed on the vertices in the prefix
using Algorithm~\ref{alg:prefix-parallel}, ignoring the rest of the
graph.  When the call to Algorithm~\ref{alg:prefix-parallel} finishes,
all vertices in the prefix have been processed and either belong to
the MIS or have a neighbor in the MIS.  All neighbors of these newly
discovered MIS vertices and their incident edges are removed from the
graph to complete the round.

The advantage of analyzing Algorithm~\ref{alg:prefix-parallel} instead
of Algorithm~\ref{alg:fully-parallel} is that at the beginning of each
round, the ordering among remaining vertices is still uniform, as the
only information the algorithm has discovered about each vertex is
that it appears after the last prefix. The goal of the analysis is
then to argue a) that the number of steps in each parallel round is small,
and that b) the number of rounds is small.  The latter can be
accomplished directly by selecting prefixes that are ``large enough,''
and constructively using a small number of rounds.  Larger prefixes
increase the number of steps within each round, however, so some care must be
taken in tuning the prefix sizes.

Our analysis assumes that the graph is arbitrary (i.e., adversarial),
but that the ordering on vertices is random.  In contrast, the
previous analysis in this style~\cite{CoppersmithRaTo89,CalkinFr90}
assume that the underlying graph is random, a fact which is exploited
to show that the number of steps within each round is small.   Our analysis, on
the other hand, must cope with nonuniformity on the permutations of
(sub)prefixes as the prefix is processed with
Algorithm~\ref{alg:fully-parallel}.

\subsubsection*{Reducing vertex degrees}
A significant difficulty in analyzing the number of steps of a single round of
Algorithm~\ref{alg:prefix-parallel} (i.e., the execution of
Algorithm~\ref{alg:fully-parallel} on a prefix) is that the steps of
Algorithm~\ref{alg:fully-parallel} are not independent given a single
random permutation that is not regenerated after each iteration.  The
dependence, however, arises partly due to vertices of drastically
different degree, and can be bounded by considering only vertices of
nearly the same degree during each round.  

Let $\Delta$ be the \emph{a priori} maximum degree in the graph.  We
will select prefix sizes so that after the $i$th round, all remaining
vertices have degree at most $\Delta/2^i$ with high probability
\footnote{We use ``with high probability'' (w.h.p.) to mean
    probability at least $1-1/n^c$ for any constant $c$, affecting the
    constants in order notation.}.
After $\log\Delta < \log n$ rounds, all vertices have degree 0, and
thus can be removed in a single step.  Bounding the number of steps in
each round to $O(\log n)$ then implies that
Algorithm~\ref{alg:prefix-parallel} has $O(\log^2 n)$ total steps, and
hence so does Algorithm~\ref{alg:fully-parallel}.

The following lemma and corollary state that after the first
$\Omega(n\log(n)/d)$ vertices have been processed, all remaining
vertices have degree at most $d$.

\begin{lemma}\label{lem:degrees}
  Suppose that the ordering on vertices is uniformly random, and
  consider the $(\ell/d)$-prefix for any positive $\ell$ and
  $d\leq n$.  If a lexicographically first MIS of the prefix and all
  of its neighbors are removed from $G$, then all remaining vertices
  have degree at most $d$ with probability at least $1-n/e^\ell$.
\end{lemma}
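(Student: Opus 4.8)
The plan is to argue that any vertex $v$ that still has high degree after removing the MIS of the prefix must have had all of its (many) high-priority-in-the-prefix neighbors avoid the MIS, which is an unlikely event. Let me think about what the prefix-based removal actually guarantees: after computing a lexicographically first MIS $W$ of the prefix $P = P(V,\pi,\ell/d)$ and removing $P \cup N(W)$ from $G$, a surviving vertex $v \in V \setminus (P \cup N(W))$ is one that was not itself in the prefix and none of whose neighbors landed in $W$. I want to show that if $v$ has more than $d$ surviving neighbors, this is unlikely.

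First I would fix an arbitrary vertex $v$ and condition on $v$ having degree greater than $d$ in $G$ (if its degree is already at most $d$ there is nothing to prove). The key observation is that $v$ survives with its neighbors intact only if \emph{none} of $v$'s neighbors that fall in the prefix $P$ are selected into the MIS $W$. A clean sufficient condition for a prefix vertex $u$ to be forced into (or dominated within) the MIS is to look at the earliest-priority neighbor: if we consider $v$'s neighbors, each neighbor independently falls into the $(\ell/d)$-prefix with probability $\ell/d$ (since the ordering is uniformly random and the prefix consists of the earliest $\delta|V|$ fraction). I would argue that if \emph{any} neighbor $u$ of $v$ lands in the prefix $P$ and $u$ itself has no earlier neighbor inside $P$, then $u$ is a root of the prefix priority DAG and hence enters $W$, which kills $v$. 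More simply, it suffices that the globally highest-priority neighbor of $v$ falls in the prefix: that neighbor is then added to $W$ (its fate in the prefix MIS is to be selected, because within $P$ it is dominated only by still-higher-priority prefix vertices, and chasing this up the priority order some prefix neighbor of $v$ must enter $W$).

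The cleanest version of the argument I would actually carry out: consider the set $S$ of the $d$ highest-priority neighbors of $v$ (this set exists because $\deg(v) > d$). The event that $v$ survives with degree exceeding $d$ requires in particular that none of these $d$ neighbors enters $W$; a necessary sub-event is that not all of them lie outside the prefix — but more to the point, I claim that if the single highest-priority neighbor among all of $v$'s neighbors lands in the prefix, then $v$ is removed. This is because that top neighbor, call it $u$, has the property that any prefix-neighbor of $u$ with still higher priority is a candidate, and by induction on priority order the highest-priority prefix vertex reachable along $v$'s neighborhood structure is a root and joins $W$, dominating $v$. Therefore $v$ can only survive if none of its $d$ top neighbors falls into the prefix. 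Since each neighbor independently lands in the prefix with probability $\ell/d$, the probability that all $d$ of them miss the prefix is at most $(1 - \ell/d)^d \le e^{-\ell}$.

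Finally I would take a union bound over the at most $n$ vertices $v$, giving a failure probability of at most $n \cdot e^{-\ell} = n/e^{\ell}$, which matches the claimed bound. The main obstacle is making rigorous the step that ``$v$ survives $\Rightarrow$ none of its high-priority neighbors enters the MIS $W$'' — in particular showing that a sufficiently high-priority prefix-neighbor of $v$ is genuinely forced into $W$ rather than merely being a candidate. I expect to handle this by the standard lexicographically-first MIS property: trace the greedy sequential order restricted to $P$, and observe that the first neighbor of $v$ in that order is either placed in $W$ or is dominated by an even earlier neighbor already in $W$; either way some neighbor of $v$ is in $W$, so $v \in N(W)$ and is removed. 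The independence of the ``in-prefix'' indicator across the chosen $d$ neighbors, which relies on the ordering being uniformly random and on our only needing a fixed set of $d$ neighbors, is what makes the $e^{-\ell}$ concentration go through.
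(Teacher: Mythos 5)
Your argument has a genuine gap at exactly the step you flag as the ``main obstacle,'' and the fallback you sketch for it does not work. You need the implication ``if a high-priority neighbor $u$ of $v$ lands in the prefix $P$, then some neighbor of $v$ enters $W$,'' and your proposed justification --- the first prefix-neighbor of $v$ in the greedy order ``is either placed in $W$ or is dominated by an even earlier neighbor already in $W$; either way some neighbor of $v$ is in $W$'' --- fails in the second case: the dominating vertex is a neighbor of $u$, not of $v$, so the domination chain immediately walks out of $v$'s neighborhood and never returns. Concretely, let $v$ have neighbors $u_1,\dots,u_{d+1}$ and let a hub $h$, adjacent to every $u_i$ but not to $v$, land in the prefix earlier than all the $u_i$. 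Then $h$ is a root of the priority DAG of $G[P]$, enters $W$, and dominates every prefix-neighbor of $v$; no neighbor of $v$ ever joins $W$, and $v$ survives even though its top neighbors lie in $P$. This event has probability on the order of $\delta/d$, far larger than $e^{-\ell}$ for the relevant parameter settings (e.g.\ $\ell = \Theta(\log n)$), so ``all $d$ top neighbors avoid the prefix'' is simply not a necessary condition for survival, and your $(1-\ell/d)^d \le e^{-\ell}$ computation bounds the wrong event. Note also that your reformulated target --- every vertex of $G$-degree greater than $d$ is removed w.h.p.\ --- is false as stated (in the example above $v$ survives, albeit with remaining degree $0$); the lemma only requires ``removed \emph{or} left with degree at most $d$,'' and any correct proof must track that weaker disjunction.

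The paper's proof avoids this trap by replacing your static argument (a fixed set of top-priority neighbors, each independently in or out of $P$) with a dynamic one. It runs the equivalent sequential process: draw the $\delta n = (\ell/d)n$ prefix vertices one at a time uniformly at random without replacement, maintaining live/dead status, where a drawn \emph{live} vertex always joins the MIS --- being live means it has no earlier MIS neighbor, so nothing elsewhere in the graph can block it. Fixing a vertex $u$, at any step where $u$ is still live with at least $d$ live neighbors, the next draw hits one of those live neighbors with probability at least $d/n$, which kills $u$ with certainty; hence the probability that $u$ endures all $\delta n$ steps while retaining $\ge d$ live neighbors is at most $(1-d/n)^{\ell n/d} \le e^{-\ell}$, and a union bound over the $n$ vertices gives the claimed $1 - n/e^{\ell}$. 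The fact ``a drawn live vertex necessarily enters $W$'' is the correct substitute for your unprovable ``a high-priority prefix-neighbor of $v$ is forced into $W$'': it conditions only on the history of the random draws, not on how domination chains resolve in the rest of the graph. Your independence observation and final union bound are fine, but they cannot rescue the argument without this change of event.
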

\begin{proof}
  Consider the following sequential process, equivalent to the
  sequential Algorithm~\ref{alg:sequential} (in this proof we will refer to a recursive call of Algorithm~\ref{alg:sequential} as a step).  The process consists of
  $\ell/d$ steps.  Initially, all vertices are \defn{live}.
  Vertices become \defn{dead} either when they are added to the MIS or
  when a neighbor is added to the MIS.  During each step, randomly
  select a vertex $v$, without replacement.  The selected vertex may
  be live or dead.  If $v$ is live, it has no earlier neighbors in the
  MIS.  Add $v$ to the MIS, after which $v$ and all of its neighbors
  become dead.  If $v$ is already dead, do nothing.  Since vertices
  are selected in a random order, this process is equivalent to
  choosing a permutation first then processing the prefix.

  Fix any vertex $u$.  We will show that by the end of this sequential
  process, $u$ is unlikely to have more than $d$ live neighbors.
  (Specifically, during each step that it has $d$ neighbors, it is
  likely to become dead; thus, if it remains live, it is unlikely to
  have many neighbors.)  Consider the $i$th step of the sequential
  process. If either $u$ is dead or $u$ has fewer than $d$ live
  neighbors, then the claim holds.  Suppose instead that $u$ has at
  least $d$ live neighbors.  Then the probability that the $i$th
  step selects one of these neighbors is at least $d/(n-i) >
  d/n$.  If the live neighbor is selected, that neighbor is added
  to the MIS and $u$ becomes dead.  The probability that $u$ remains
  live during this step is thus at most $1-d/n$.  Since each step
  selects the next vertex uniformly at random, the probability that no
  step selects any of the $d$ neighbors of $u$ is at most
  $(1-d/n)^{\delta n}$, where $\delta = \ell/d$. This failure
  probability  is at most $((1-d/n)^{n/d})^\ell < (1/e)^\ell$. Taking a union bound over all vertices completes the proof.
\end{proof}

\begin{corollary}\label{cor:degrees}
  Let $\Delta$ be the \emph{a priori} maximum vertex degree.  Setting
  $\delta=\Omega(2^i\log(n)/\Delta)$ for the $i$th round of
  Algorithm~\ref{alg:prefix-parallel}, all remaining vertices after
  the $i$th round have degree at most $\Delta/2^i$, with high
  probability.

\end{corollary}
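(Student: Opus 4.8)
The plan is to obtain Corollary~\ref{cor:degrees} as a direct instantiation of Lemma~\ref{lem:degrees}, applied once per round of Algorithm~\ref{alg:prefix-parallel}. First I would observe that a single round performs exactly the operation the lemma analyzes: it takes the $\delta$-prefix $P$ of the remaining vertices, computes the lexicographically first MIS $W$ of $G[P]$ via Algorithm~\ref{alg:fully-parallel}, and then removes $P \cup N(W)$. Since $W$ is maximal in $G[P]$, every vertex of $P\setminus W$ has a neighbor in $W$, so $P \subseteq W \cup N(W)$ and hence $V \setminus (P \cup N(W)) = V \setminus (W \cup N(W))$. In other words, the round removes a lexicographically first MIS of the prefix together with all of its neighbors, which is precisely the event whose effect on degrees the lemma bounds.

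Next I would fix the parameters. To force the post-round maximum degree down to $\Delta/2^i$, set $d = \Delta/2^i$ in the lemma, and to make the failure probability polynomially small choose $\ell = c\ln n$ for a constant $c>1$. With these choices the prefix required by the lemma is the $(\ell/d)$-prefix, i.e. the $\delta$-prefix with $\delta = \ell/d = c\,2^i \ln n/\Delta = \Omega(2^i \log n/\Delta)$, matching the corollary. Lemma~\ref{lem:degrees} then guarantees that after the round every remaining vertex has degree at most $d = \Delta/2^i$, except with probability at most $n/e^\ell = n^{-(c-1)}$, which is the claimed high-probability bound. If instead $\ell/d \geq 1$, the prefix is the entire remaining graph, the round computes a full MIS, and every vertex is removed, so the degree bound holds trivially.

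The one point that needs care, and the main obstacle, is verifying that the hypothesis of Lemma~\ref{lem:degrees}---that the ordering on the vertices under consideration is uniformly random---holds at the start of round $i$ and not merely at the very beginning. This is exactly where analyzing Algorithm~\ref{alg:prefix-parallel} rather than Algorithm~\ref{alg:fully-parallel} pays off: the only information revealed about the surviving vertices after round $i-1$ is that they lie after every previously processed prefix, so conditioned on the identity of the surviving set their relative order is still uniform. I would state this invariant explicitly and establish it by induction on rounds, using the fact that $W$ and $N(W)$ in each round are determined solely by the relative order \emph{within} the processed prefixes. I would also note that when the lemma is applied to the surviving subgraph its ``$n$'' is the current count $n_i \leq n$ of remaining vertices; bounding $n_i \leq n$ only weakens the failure probability, so using the global $n$ is safe.

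Finally, since there are at most $\log\Delta < \log n$ rounds before all degrees reach $0$, I would, where the corollary is later invoked across all rounds, take a union bound over the rounds to conclude that every per-round degree bound holds simultaneously with high probability, absorbing the additional $\log n$ factor into the constant $c$.
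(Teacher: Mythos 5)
Your proposal is correct and takes essentially the same route as the paper, whose entire proof is the instantiation of Lemma~\ref{lem:degrees} with $d=\Delta/2^i$ and $\ell=\Omega(\log n)$. The extra details you supply---that removing $P\cup N(W)$ equals removing $W\cup N(W)$, that the ordering of surviving vertices remains uniform at the start of each round, and the union bound over rounds---are points the paper handles in its surrounding prose and in the proof of Theorem~\ref{thm:numrounds}, so they are faithful elaborations rather than a different argument.
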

\begin{proof}
  This follows from Lemma~\ref{lem:degrees} with $\ell =
  \Omega(\log(n))$ and $d=\Delta/2^i$.  
\end{proof}

\subsubsection*{Bounding the number of steps in each round}

To bound the \ddepth{} of each prefix in
Algorithm~\ref{alg:prefix-parallel}, we compute an upper bound on the
length of the longest path in the \pdag{} induced by the prefix, as
this path length provides an upper bound on the \ddepth{}.

The following lemma says that as long as the prefix is not too large
with respect to the maximum degree in the graph, then the longest path
in the \pdag{} of the prefix has length $O(\log n)$.  

\begin{lemma}\label{lem:prefix-depth}
  Suppose that all vertices in a graph have degree at most $d$, and
  consider a randomly ordered $\delta$-prefix.  For any $\ell$ and $r$
  with $\ell \geq r \geq 1$, if $\delta < r/d$, then the longest
  path in the \pdag{} has length $O(\ell)$ with probability at least
  $1-(r/\ell)^\ell$.
\end{lemma}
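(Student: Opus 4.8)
The plan is to bound the longest directed path in the \pdag{} of the prefix by a first-moment (union bound) argument over all candidate paths. The key observation is that this longest path depends only on the random permutation $\pi$: a directed path of $k$ vertices in the prefix \pdag{} is exactly a sequence $v_1, v_2, \ldots, v_k$ such that (i) consecutive vertices are adjacent in $G$, (ii) all $k$ vertices lie in the $\delta$-prefix, and (iii) their priorities are increasing, $\pi(v_1) < \cdots < \pi(v_k)$. Thus $P(\text{longest path} \geq k)$ is at most the expected number of such $k$-vertex paths, and it suffices to choose $k = O(\ell)$ making this expectation at most $(r/\ell)^\ell$.

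To bound the expectation, first I would count the vertex sequences satisfying (i): since every vertex has degree at most $d$, the number of walks of $k$ vertices is at most $n\,d^{k-1}$ (choose a start, then at most $d$ neighbors at each step). Next I would bound the probability that a fixed such sequence satisfies (ii) and (iii). Modeling $\pi$ by assigning each vertex an independent uniform priority and taking the prefix to be the vertices with the $\delta$ smallest priorities, the probability that $k$ fixed vertices all fall in the prefix is at most $\delta^k$, and conditioned on this, all $k!$ relative orders are equally likely, so the probability of the single increasing order is $1/k!$. Hence each candidate is realized with probability at most $\delta^k/k!$, and by linearity $E[\#\,k\text{-paths}] \leq n\,d^{k-1}\delta^k/k!$.

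It then remains to convert this into the stated tail bound. Using the hypothesis $\delta < r/d$ gives $(d\delta)^{k-1} < r^{k-1}$, so $E[\#\,k\text{-paths}] \leq n\,r^{k-1}/k!$, and with the estimate $k! \geq (k/e)^k$ this is at most $(n/r)(er/k)^k$. I would then set $k = c\ell$ for a sufficiently large constant $c$: since $r \leq \ell$, the factor $(r/\ell)^{c\ell}$ is at most $(r/\ell)^\ell$, while the remaining factors $n(e/c)^{c\ell}$ are at most $1$ once $c \geq e^2$ and $\ell = \Omega(\log n)$ (the regime in which the claimed bound $1-(r/\ell)^\ell$ is a genuine high-probability guarantee). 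A final application of Markov's inequality then yields $P(\text{longest path} \geq c\ell) \leq (r/\ell)^\ell$, i.e., the longest path has length $O(\ell)$ with probability at least $1-(r/\ell)^\ell$.

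I expect the main obstacle to be the bookkeeping around the factor of $n$ arising from summing over all possible starting vertices of the path: the super-exponential decay of $1/k!$ must be strong enough to absorb it, which is precisely why one takes the path length to be a constant multiple of $\ell$ rather than exactly $\ell$, and why the argument is cleanest when $\ell = \Omega(\log n)$. A secondary technical point is justifying the clean per-candidate probability $\delta^k/k!$ — in particular that prefix membership and relative order factor as stated — which is where the uniformity of $\pi$ over the remaining vertices, guaranteed by the prefix-based Algorithm~\ref{alg:prefix-parallel}, is essential.
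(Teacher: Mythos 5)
Your proof follows the same first-moment skeleton as the paper's---a union bound over candidate length-$k$ increasing paths in the prefix---but with a dual enumeration: you count walks in $G$ (at most $n d^{k-1}$ of them) and charge each one probability $\delta^k/k!$, whereas the paper counts the $\binom{\delta n}{k}$ sets of $k$ prefix positions and charges each one $\prod_{i=1}^{k-1} d/(n-i)$ by revealing the vertices at those positions one at a time. Expanding the binomial shows the two computations are essentially identical: yours gives $(e\delta n/k)^k (d/n)^{k-1}$, the paper's gives $(e\delta n/k)^k \left(d/(n-k)\right)^{k}$, and the discrepancy is exactly the factor you flag---and it originates in an off-by-one in the paper's proof: a path on $k$ positions has only $k-1$ edges, yet the proof multiplies $k$ edge probabilities together. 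With the exponent corrected to $k-1$, the paper's bound also carries a leftover factor of $(n-k)/d$, matching your $n/d$. So the obstacle you identify is genuine, not an artifact of your encoding. Indeed the lemma as literally stated (for all $\ell \geq r \geq 1$, uniformly in $n$) is false for constant $\ell$: on an $n$-cycle ($d=2$) with $\delta$ just below $1/2$ and $r=1$, the longest increasing path in the prefix is $\Theta(\log n/\log\log n)$ w.h.p.\ (expected count of increasing $k$-arcs is roughly $2n\delta^k/k!$, plus a routine second-moment argument over disjoint arcs), which exceeds any fixed bound $O(\ell)$ for $\ell=2$, while the claimed success probability $1-(r/\ell)^\ell = 3/4$ stays fixed. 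Some lower bound on $\ell$ of the kind you impose is therefore necessary, and your caveat exposes a real slip in the paper rather than a gap in your own argument.

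The one place your write-up falls short of what the paper actually needs is the strength of that restriction: you require $\ell = \Omega(\log n)$, but Corollary~\ref{cor:logdepth} invokes the lemma twice, once with $\ell = \Theta(\log n)$ (covered) and once with $r=1$ and $\ell = \log n/\log\log n = o(\log n)$ (not covered as you state it). Your own computation repairs this if you keep the slack you discard: rather than bounding $(r/\ell)^{(c-1)\ell}$ by $1$, note that the requirement is $n \leq r\,(c/e)^{c\ell}\,(\ell/r)^{(c-1)\ell}$, for which $\ell\,(1+\log(\ell/r)) = \Omega(\log n)$ suffices once $c$ is a large enough constant. Both instantiations in the corollary satisfy this: for $r=1$ and $\ell = \log n/\log\log n$ one has $\ell \log \ell = \Theta(\log n)$. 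In short, your proof is correct where it applies, it makes explicit a hypothesis that the paper's proof silently (and erroneously) elides via the $k$-versus-$(k-1)$ exponent, and retaining the $(r/\ell)^{(c-1)\ell}$ factor extends your argument to every regime in which the lemma is actually used downstream.
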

\begin{proof}
  Consider an arbitrary set of $k$ positions in the prefix---there
  are ${\delta n \choose k}$ of these, where $n$ is the number of
  vertices in the graph.\footnote{The number of vertices $n$ here
    refers to those that have not been processed yet. The bound holds
    whether or not this number accounts for the fact that some
    vertices may be ``removed'' from the graph out of order, as the
    $n$ will cancel with another term that also has the same
    dependence.}  Label these positions from lowest to highest
  $(x_1,\ldots,x_k)$.  To have a directed path in these positions,
  there must be an edge between $x_i$ and $x_{i+1}$ for $1 \leq i <
  k$.  For a random graph, the probability of each edge is
  independent, so the probability of a path is easy to bound.  For an
  arbitrary graph, however, the edge probabilities are not
  independent.  Having the prefix be randomly ordered is equivalent to
  first selecting a random vertex for position $x_1$, then $x_2$, then
  $x_3$, and so on.  The probability of an edge existing between $x_1$
  and $x_2$ is at most $d/(n-1)$, as $x_1$ has at most $d$ neighbors
  and there are $n-1$ other vertices remaining to sample from.  The
  probability of an edge between $x_2$ and $x_3$ then becomes at most
  $d/(n-2)$. (In fact, the numerator should be $d-1$ as $x_1$ already
  has an edge to $x_0$, but rounding up here only weakens the bound.)
  In general, the probability of an edge existing between $x_{i}$ and
  $x_{i+1}$ is at most $d/(n-i)$, as $x_{i}$ may have $d$ other
  neighbors and $n-i$ nodes remain in the graph.  The probability
  increases with each edge in the path since once $x_1,\ldots,x_{i}$
  have been fixed, we may know, for example, that $x_{i}$ has no edges
  to $x_0,\ldots,x_{i-2}$.  Multiplying the $k$ probabilities together
  gives us the probability of a directed path from $x_1$ to $x_k$,
  which we round up to $(d/(n-k))^k$.

  Taking a union bound over all ${\delta n \choose k}$ sets of $k$
  positions (i.e., over all length-$k$ paths through the prefix) gives
  us probability at most

  \begin{equation*}
    {\delta n \choose
      k}*(d/(n-k))^k \leq \left(\frac{e \delta n}{k}\right)^k
    *\left(\frac{d}{n-k}\right)^{k} 
    = \left(\frac{e\delta n d}{k(n-k)}\right)^k
    \leq \left(\frac{2e\delta d}{k}\right)^k
  \end{equation*}
  Where the last step holds for $k<n/2$. Setting $k = 2e\ell$ and
  $\delta < r/d$ gives a probability of at most $(r/\ell)^\ell$ of
  having a path of length $4e\ell$ or longer.  Note that if we have
  $4e\ell > n/2$, violating the assumption that $k < n/2$, then $n =
  O(\ell)$, and hence the claim holds trivially.
\end{proof}

\begin{corollary}\label{cor:logdepth}
  Suppose that all vertices in a graph have degree at most $d$, and
  consider a randomly ordered prefix.  
  For an $O(\log(n)/d)$-prefix or smaller, the longest path in the
  \pdag{} has length $O(\log(n))$ w.h.p.
  For a $(1/d)$-prefix or smaller, the longest path has length
  $O(\log(n)/\log\log(n))$ w.h.p.
\end{corollary}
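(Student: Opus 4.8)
The plan is to derive both claims directly from Lemma~\ref{lem:prefix-depth} by instantiating the free parameters $r$ and $\ell$ appropriately; no new argument is needed, since the lemma already trades prefix size against longest-path length through a tail bound of the form $(r/\ell)^\ell$. Recall that ``with high probability'' means a failure probability of at most $n^{-c}$ for any desired constant $c$, so in each case I need only choose $\ell$ large enough that $(r/\ell)^\ell \leq n^{-c}$ while respecting the constraints $\ell \geq r \geq 1$ and $\delta < r/d$.

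For the first claim, I would consider an $O(\log n / d)$-prefix, which corresponds to taking $\delta < r/d$ with $r = a\log n$ for a constant $a$. I would then set $\ell = b\log n$ for a constant $b$ to be fixed. The failure probability becomes $(r/\ell)^\ell = (a/b)^{b\log n} = n^{-b\log(b/a)}$. Since $b/a > 1$ whenever $b > a$, choosing $b$ sufficiently large relative to $a$ drives the exponent $b\log(b/a)$ above any target constant $c$, so the longest path has length $O(\ell) = O(\log n)$ with high probability, and the requirement $\ell \geq r$ holds automatically for $b \geq a$.

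For the second claim, I would consider a $(1/d)$-prefix, so that $\delta < r/d$ holds with $r = 1$ (a constant). The tail bound then simplifies to $(1/\ell)^\ell = \ell^{-\ell}$, and I need $\ell^{-\ell} \leq n^{-c}$, equivalently $\ell \log \ell \geq c \log n$. The one genuine step here is inverting this relation: setting $\ell = \Theta(\log n / \log\log n)$ gives $\log \ell = \Theta(\log\log n)$, so that $\ell \log \ell = \Theta(\log n)$, and the hidden constant in $\ell$ can be tuned to meet any target $c$. Hence the longest path has length $O(\ell) = O(\log n / \log\log n)$ with high probability.

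The only place requiring any care is this inversion in the second claim---recognizing that $\ell = \Theta(\log n / \log\log n)$ is exactly the scale at which $\ell\log\ell$ reaches $\Theta(\log n)$---but this is a routine calculation rather than a real obstacle. Both claims then follow immediately from Lemma~\ref{lem:prefix-depth}.
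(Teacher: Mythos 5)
Your proposal is correct and follows essentially the same route as the paper's own proof, which likewise just instantiates Lemma~\ref{lem:prefix-depth} with $\ell = 2r = O(\log n)$ for the first claim and $r=1$, $\ell = \log(n)/\log\log(n)$ for the second; your choices ($\ell = b\log n$ with $b>a$, and the inversion $\ell\log\ell = \Theta(\log n)$) are the same calculation with the constants spelled out slightly more explicitly.
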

\begin{proof}
  For the first claim, apply Lemma~\ref{lem:prefix-depth} with $\ell =
  2r = O(\log(n))$.  For the second claim, use $r = 1$ and $\ell =
  \log(n)/\log\log(n)$.
\end{proof}

Note that we want our bounds to hold with high probability with
respect to the original graph, so the $\log(n)$ in this corollary
should be treated as a constant across the execution of the algorithm.

\subsubsection*{Parallel greedy MIS has low \ddepth{}}
We now combine the fact that there are $\log n$ rounds with the
$O(\log n)$ steps per round to prove the following theorem on the
number of rounds in Algorithm~\ref{alg:fully-parallel}.  

\begin{theorem}\label{thm:numrounds}
  For a random ordering on vertices, where $\Delta$ is the maximum
  vertex degree, the \ddepth{} of the \pdag{} is $O(\log\Delta
  \log n) = O(\log^2 n)$ w.h.p.  Equivalently,
  Algorithm~\ref{alg:fully-parallel} requires $O(\log^2n)$ iterations
  w.h.p.
\end{theorem}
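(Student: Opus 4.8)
The plan is to assemble Theorem~\ref{thm:numrounds} from the two pieces already proved: Corollary~\ref{cor:degrees}, which controls how the maximum degree shrinks across rounds, and Corollary~\ref{cor:logdepth}, which bounds the longest \pdag{} path (and hence the \ddepth{}) within a single round of Algorithm~\ref{alg:prefix-parallel}. The key conceptual point is that the \ddepth{} of Algorithm~\ref{alg:fully-parallel} is upper-bounded by the total number of steps taken by the less-parallel Algorithm~\ref{alg:prefix-parallel}, since restricting attention to a prefix can only delay vertices and never accelerate them (as noted in the text immediately after Algorithm~\ref{alg:prefix-parallel}). So it suffices to bound the total steps of the modified algorithm.

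First I would fix the prefix schedule. In round $i$ (starting from $i=0$), invoke Corollary~\ref{cor:degrees} with $\delta = \Omega(2^i \log(n)/\Delta)$ so that, conditioned on all remaining vertices having degree at most $\Delta/2^{i-1}$ entering the round, all remaining vertices have degree at most $\Delta/2^i$ after the round, w.h.p. Since $\Delta \leq n$, after $\lceil \log \Delta \rceil < \log n$ rounds every remaining vertex has degree $0$ and is removed in a single final step. This gives $O(\log \Delta)$ rounds. Crucially, the prefix size chosen above is exactly an $O(\log(n)/d)$-prefix with $d = \Delta/2^i$ being the degree bound at the start of the round, so Corollary~\ref{cor:logdepth} applies: the longest path in the \pdag{} of that prefix is $O(\log n)$ w.h.p., which bounds the number of steps (calls to Algorithm~\ref{alg:fully-parallel}) inside the round by $O(\log n)$.

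Multiplying the $O(\log \Delta)$ rounds by the $O(\log n)$ steps per round yields $O(\log \Delta \log n) = O(\log^2 n)$ total steps for Algorithm~\ref{alg:prefix-parallel}, and hence the same bound on the \ddepth{} of Algorithm~\ref{alg:fully-parallel}. The final step is to handle the probabilities carefully. Each round's degree-reduction guarantee and each round's path-length guarantee hold w.h.p.\ individually; since there are only $O(\log n) = O(\mathrm{polylog}\, n)$ such events, a union bound over all of them preserves the high-probability conclusion (at the cost of adjusting the constant $c$ in the definition of w.h.p.). I would note, as the text already flags, that the $\log n$ appearing in the per-round degree and path bounds should be treated as a fixed quantity tied to the original graph size rather than the shrinking residual vertex count, which keeps the union bound clean.

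The main obstacle is the conditioning across rounds. The guarantees of Corollary~\ref{cor:degrees} and Corollary~\ref{cor:logdepth} assume the ordering on the remaining vertices is uniformly random, but after processing earlier prefixes one must argue this conditional uniformity is preserved. This is precisely the reason the analysis goes through Algorithm~\ref{alg:prefix-parallel} rather than Algorithm~\ref{alg:fully-parallel} directly: as explained in the discussion preceding Lemma~\ref{lem:degrees}, at the start of each round the only information exposed about an unprocessed vertex is that it lies after the previous prefix, so the relative order of remaining vertices is still uniform. I would make this explicit when chaining the rounds, ensuring that each invocation of the two corollaries is justified by a genuinely uniform residual ordering, so that the per-round failure probabilities are legitimately bounded and the union bound is valid.
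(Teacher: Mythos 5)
Your proposal is correct and follows essentially the same route as the paper's own proof: the same prefix schedule $\delta = \Theta(2^i\log(n)/\Delta)$, Corollary~\ref{cor:degrees} for the $O(\log\Delta)$ degree-halving rounds, Corollary~\ref{cor:logdepth} for the $O(\log n)$ steps per round, a union bound over all these events with $\log n$ fixed to the original graph size, and the delay argument transferring the bound from Algorithm~\ref{alg:prefix-parallel} to Algorithm~\ref{alg:fully-parallel}. Your explicit attention to the uniformity of the residual ordering across rounds is a point the paper handles in the discussion preceding Lemma~\ref{lem:degrees} rather than inside the proof itself, but it is the same argument.
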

\begin{proof}
  We first bound the number of rounds of Algorithm~\ref{alg:prefix-parallel},
  choosing $\delta = c2^i\log(n)/\Delta$ in the $i$th round, for some
  constant $c$ and constant $\log(n)$ (i.e., $n$ here means the
  original number of vertices).  Corollary~\ref{cor:degrees} says that
  with high probability, vertex degrees decrease in each round.
  Assuming this event occurs (i.e., vertex degree is $d < \Delta/2^i$),
  Corollary~\ref{cor:logdepth} says that with high probability, the
  number of steps per round is $O(\log n)$.  Taking a union bound across any of
  these events failing says that every round decreases the degree
  sufficiently and thus the number of rounds required is $O(\log n)$ w.h.p.  We then
  multiply the number of steps in each round by the number of rounds to get the
  theorem bound.
  Since Algorithm~\ref{alg:prefix-parallel} only delays processing
  vertices as compared to Algorithm~\ref{alg:fully-parallel}, it
  follows that this bound on steps also applies to
  Algorithm~\ref{alg:fully-parallel}.
\end{proof}

\vspace{-0.1in}
\section{Achieving a linear work MIS algorithm}\label{sec:work}

While Algorithm~\ref{alg:fully-parallel} has low depth a na\"{\i}ve
implementation will require $O(m)$ work on each step to process all
edges and vertices and therefore a total $O(m \log^2 n)$ work.  Here
we describe two linear work versions.  The first is a smarter
implementation of Algorithm~\ref{alg:fully-parallel} that directly
traverses the \pdag{} only doing work on the roots and their neighbors
on each step---and therefore every edge is only processed once.  The
algorithm therefore does linear work and has computation depth that is
proportional to the
\ddepth{}.  The second follows the form of
Algorithm~\ref{alg:prefix-parallel}, only processing prefixes of
appropriate size.  It has the advantage that it is particularly easy
to implement.  We use this second algorithm for our experiments.

\subsubsection*{Linear work through maintaining root sets}

The idea of the linear work implementation of
Algorithm~\ref{alg:fully-parallel} is to explicitly keep on each step
of the algorithm the set of roots of the remaining
\pdag{}, e.g., as an array.  With this set it is easy to identify the
neighbors in parallel and remove them, but it is trickier to identify
the new root set for the next step.  One way to identify them would be
to keep a count for each vertex of the number of neighbors with higher
priorities (parents in the \pdag{}), decrement the counts whenever a
parent is removed, and add a vertex to the root set when its count
goes to zero.  The decrement, however, needs to be done in parallel
since many parents might be removed simultaneously.  Such decrementing
is hard to do work-efficiently when only some vertices are being
decremented.  Instead we note that the algorithm only needs to
identify which vertices have at least one edge removed on the step and
then check each of these to see if all their edges have been removed.
We refer to a \defn{misCheck} on a vertex as the operation of checking if
it has any higher priority neighbors remaining.  We assume the
neighbors of a vertex have been pre-partitioned into their parents
(higher priorities) and children (lower priorities), and that edges
are deleted lazily---i.e. deleting a vertex just marks it as deleted
without removing it from the adjacency lists of its neighbors.

\begin{lemma}\label{lem:check} 
  For a graph with $m$ edges and $n$ vertices where vertices are
  marked as deleted over time, any set of $l$ misCheck
  operations can be done in $O(l + m)$ total work, and each operation
  in $O(\log n)$ depth.
\end{lemma}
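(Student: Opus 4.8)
The goal is to show that a collection of $l$ misCheck operations can be performed in $O(l+m)$ total work with $O(\log n)$ depth per operation. Recall that a misCheck on a vertex $v$ asks whether $v$ has any remaining (undeleted) higher-priority neighbor. Since edges are deleted lazily, the parent list of $v$ may still contain many entries that point to already-deleted vertices, so a naive scan of the parent list on every check could cost $O(\deg(v))$ repeatedly and blow up the total work well beyond $O(l+m)$. The plan is to charge the scanning work to edges in a way that guarantees each edge is examined only a constant number of times across all checks.

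First I would maintain for each vertex $v$ a pointer (an index) into its parent adjacency list marking how far we have already scanned. A misCheck on $v$ advances this pointer forward, skipping over parents that are marked deleted, and stops as soon as it either finds a live parent (answering ``yes, $v$ still has a higher-priority neighbor'') or runs off the end of the list (answering ``no''). The key invariant is that the pointer is \emph{monotone}: it never moves backward. Therefore every parent-list slot is skipped at most once over the entire sequence of checks on $v$, because once we skip past a deleted parent it stays deleted (deletions are permanent). Summing over all vertices, the total number of slots ever skipped is at most the total size of all parent lists, which is $O(m)$. Each misCheck additionally does $O(1)$ work beyond the skipping (examining the one live parent it stops on, or detecting end-of-list), contributing $O(l)$. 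This yields $O(l+m)$ total work.

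For the depth bound, a single misCheck as described is inherently sequential in its pointer advancement, so I would instead process the scan in parallel: to answer whether any parent in the relevant suffix of $v$'s list is live, perform a parallel scan (e.g. a reduction or prefix-style search) over the unexamined portion of the parent list to find the first live parent. Using a parallel reduction over a list of length at most $\deg(v)=O(n)$ gives $O(\log n)$ depth for that operation, and the $l$ distinct misCheck operations can be run simultaneously. The monotone-pointer work argument above still bounds the total work at $O(l+m)$ since the reduction touches each skipped slot a constant number of times, and afterward the pointer is advanced past all the deleted prefix so those slots are never revisited.

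The main obstacle is reconciling the work bound with the depth bound: the monotone-pointer trick that keeps total work at $O(l+m)$ is naturally a sequential amortization, whereas the $O(\log n)$ depth requires scanning a suffix in parallel, which could re-read slots. The crux is to argue that the parallel suffix-scan does not destroy the amortized charging—specifically, that once a check advances $v$'s pointer past a block of deleted parents, subsequent checks on $v$ begin strictly after that block, so no live-parent search ever re-examines a slot that a previous search already skipped. Making this charging precise while allowing many checks on distinct vertices to proceed concurrently (and handling the subtlety that concurrent checks are on \emph{distinct} vertices, so there is no write contention on the per-vertex pointers) is where the argument requires care; the degree bound $\deg(v)\le n$ guarantees each individual parallel scan fits within $O(\log n)$ depth.
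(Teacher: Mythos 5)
Your amortization is exactly the paper's: keep a monotone pointer into each vertex's parent array, skip deleted parents while charging each skipped slot to its (permanently removed) edge, and charge the $O(1)$ terminal step to the check itself, giving $O(l+m)$ work. The gap is in how you parallelize the search. You propose a parallel reduction over the \emph{entire} unexamined suffix of the parent list to find the first live parent. That reduction reads every slot in the suffix, not just the deleted prefix that gets skipped; its cost is $\Theta(\text{suffix length})$ even when the very first parent is live and the pointer advances by zero. Consider a vertex $v$ of degree $d$ whose first parent is never deleted: each of $l$ checks on $v$ then costs $\Theta(d)$ work while skipping nothing, for $\Theta(ld)$ total---the slots beyond the first live parent are re-read on every check, and there is nothing to charge them to. Your closing argument only rules out re-examining \emph{skipped} (deleted) slots, which is true but beside the point; the un-skipped live slots past the stopping position are what break the $O(l+m)$ bound.

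The paper's fix, which your ``prefix-style search'' gestures at but never makes precise, is a doubling search: examine the first unexamined parent, then the next two, then the next four, and so on, stopping at the first block that contains a live parent. Then the total number of slots a check examines is within a constant factor of the number of deleted slots it skips (plus $O(1)$), so every examined slot can be charged either to a removed edge or to the check itself, preserving $O(l+m)$ work; the $O(\log n)$ blocks, each processed in $O(1)$ depth on the CRCW PRAM, give the $O(\log n)$ depth per operation. With the full-suffix reduction replaced by this geometric-block search, your argument goes through and coincides with the paper's proof.
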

\begin{proof}
  The pointers to parents are kept as an array in an arbitrary order.
  A vertex can be checked by examining the parents in order.  If a
  parent is marked as deleted we remove the edge by incrementing the
  pointer to the array start and charging the cost to that edge.  If it
  is not, the misCheck completes and we charge the cost to the
  check.  Therefore the total we charge across all operations is $l +
  m$, each of which does constant work.  To implement this in parallel
  we use doubling: first examine one parent, then the next two, then
  the next four, etc.  This completes once we find one that is
  not deleted and within a factor of two we can charge all work to the
  previous ones that were deleted.  This requires $O(\log n)$ steps
  each with $O(1)$ depth. \end{proof}

\begin{lemma}\label{lem:worklist} 
  For a graph $G$ with $m$ edges and $n$ vertices
  Algorithm~\ref{alg:fully-parallel} can be implemented on a CRCW PRAM
  in $O(m)$ total work and $O(\log n)$ depth per iteration.
\end{lemma}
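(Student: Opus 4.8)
The plan is to maintain, at the start of each iteration, the set $R$ of roots of the current \pdag{} explicitly as an array, and to show that one step of Algorithm~\ref{alg:fully-parallel} can both select the new MIS vertices and produce the root set $R'$ for the next step while touching each edge only a constant number of times. I would rely on the pre-partitioning of each vertex's adjacency list into parents (higher priority) and children (lower priority) and on lazy deletion, so that ``removing'' a vertex is just a mark. Initially $R$ is the set of vertices with empty parent lists, which I can compute in $O(m)$ work and $O(\log n)$ depth by a parallel map over the vertices followed by a pack.

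Given $R$, a single step would proceed as follows. Every vertex in $R$ is a root and hence has no surviving parent, so all of $R$ is added to the MIS in parallel. I then mark as deleted every vertex in $R$ together with each of their children; since $R$ consists of roots, $N(R)$ contains only children, matching the removal $V' = V \setminus (R \cup N(R))$ prescribed by Algorithm~\ref{alg:fully-parallel}. Traversing the child list of each just-removed vertex, I form the candidate set $C$ of all still-live children of removed vertices and pack it into an array via a prefix sum. These are exactly the vertices whose parent set changed this step, and hence the only candidates to newly become roots. Finally I run a \emph{misCheck} on each vertex of $C$ in parallel; a candidate with no surviving parent is a new root, and packing all such vertices into an array yields $R'$.

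For the depth bound, each step performs a constant number of parallel maps, prefix sums, and packs over arrays of size $O(m)$, each in $O(\log n)$ depth, together with the misChecks, which by Lemma~\ref{lem:check} each run in $O(\log n)$ depth; hence every iteration has $O(\log n)$ depth. For the work bound, note that each vertex is removed exactly once over the entire execution, and when removed its child list is traversed exactly once, so the total cost of marking removals and assembling candidate sets is $O(\sum_v |\mathrm{children}(v)|) = O(m)$.

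The crux is bounding the total misCheck work. The number of distinct steps in which a vertex $u$ can appear in a candidate set is at most the number of steps in which some parent of $u$ is removed, which is at most $|\mathrm{parents}(u)|$; summing gives a total of $L \le \sum_u |\mathrm{parents}(u)| = m$ misCheck operations across all iterations. Lemma~\ref{lem:check} then bounds the total work of these checks by $O(L + m) = O(m)$, since its charging argument already amortizes the repeated edge-scanning of checks against the one-time deletion of each edge. Adding the $O(m)$ for the initial root computation and the removals gives $O(m)$ total work. The main obstacle is precisely this last accounting: a single vertex may be re-examined in many different iterations, so a naive per-iteration work bound is too weak, and I must instead amortize across the whole execution using the global edge-charging of Lemma~\ref{lem:check}.
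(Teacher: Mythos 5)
Your implementation is essentially the paper's: an explicit root array, lazy deletion with each adjacency list pre-split into parents and children, misChecks launched on the children of removed neighbors, and amortization of all check work against deleted edges via Lemma~\ref{lem:check}. The one genuine gap is that you never eliminate duplicates from the candidate set $C$, and hence from $R'$. A vertex $u$ whose parents include several vertices removed in the same step appears once in the child list of each such parent, so your concatenate-and-pack construction puts multiple copies of $u$ into $C$, and you then launch several \emph{concurrent} misChecks on the same vertex. This is not covered by Lemma~\ref{lem:check}: its charging argument pays for scanned parents by advancing $u$'s parent pointer and charging each deleted edge once, but concurrent duplicate checks on $u$ all read the same initial pointer and re-scan the same deleted parents, so the same edge is charged by many checks. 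Concretely, suppose vertices $a_1,\dots,a_p$ are all removed in one step and each is a parent of every one of $b_1,\dots,b_p$: each $b_j$ receives $p$ concurrent checks, each scanning up to $p$ deleted parents, for $\Theta(p^3)$ work on a graph with $m=\Theta(p^2)$ edges. Your count $L\le\sum_u|\mathrm{parents}(u)|=m$ correctly bounds the number of check \emph{occurrences}, but not the work once checks on the same vertex overlap. Duplicates in $R'$ cause a second, independent failure: several copies of a new root $u$ would each traverse $u$'s child list in the next step (the lazy deletion marks are written concurrently, so they do not serialize the copies), invalidating your claim that each child list is traversed exactly once.

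The fix is exactly the step the paper flags as needing care: before any checking, each removed vertex performs an arbitrary concurrent write of its identifier into each still-live child, and only the winning writer performs the misCheck on that child and, if the check succeeds, adds the child to $R'$. This guarantees at most one check per vertex per step (so Lemma~\ref{lem:check} applies, with no two concurrent checks touching the same vertex's state and $l\le m$ overall) and at most one copy of each vertex in the root array. With that one addition, the rest of your accounting --- $O(m)$ for the one-time child-list traversals, $O(l+m)=O(m)$ for the checks, and $O(\log n)$ depth per iteration for the doubling, packing, and prefix sums --- goes through and coincides with the paper's proof.
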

\begin{proof}
  The implementation works by keeping the roots in an array, and on
  each step marking the roots and its neighbors as deleted, and then
  using misCheck on the neighbors' neighbors to determine which
  ones belong in the root array for the next step.  The total number
  of checks is at most $m$, so the total work spent on checks is
  $O(m)$.  After the misCheck's all vertices with no previous vertex
  remaining are added to the root set for the next step.  Some care
  needs to be taken to avoid duplicates in the root array since
  multiple neighbors might check the same vertex.  Duplicates can be
  avoided, however, by having the neighbor write its identifier into
  the checked vertex using an arbitrary concurrent write, and
  whichever write succeeds is responsible for the check and adding the
  vertex to the new root array.  Each iteration can be implemented in
  $O(\log n)$ depth, required for the checks and for packing the
  successful checks into a new root set.  Every vertex and its edges
  are visited once when removing them, and the total work on checks is
  $O(m)$, so the overall work is $O(m)$.
\end{proof}

\subsubsection*{Linear work through smaller prefixes}

The naive algorithm has high work because it processes every vertex
and edge in every iteration.  Intuitively, if we process small-enough
prefixes (as in Algorithm~\ref{alg:prefix-parallel}) instead of the
entire graph, there should be less wasted work.  Indeed, a prefix of
size 1 yields the sequential algorithm with $O(m)$ work but
$\Omega(n)$ depth.  There is some tradeoff here---increasing the
prefix size increases the work but also increases the parallelism.
This section formalizes this intuition and describes a highly-parallel
algorithm that has linear work. 

To bound the work, we bound the number of edges operated on
while considering a prefix.  For any prefix $P \subseteq V$ with
respect to permutation $\pi$, we define \defn{internal edges} of
$P$ to be the edges in the sub-DAG induced by $P$, i.e., those
edges that connect vertices in $P$.  We call all other edges incident
on $P$ \defn{external edges}.   The internal edges may be processed
multiple times, but external edges are processed only once.

The following lemma states that small prefixes have few internal
edges.  We will use this lemma to bound the work incurred by
processing edges.  The important feature to note is that for very
small prefixes, i.e., $\delta < k/d$ with $k\ll 1$, the number of
internal edges in the prefix is sublinear in the size of the prefix,
so we can afford to process those edges multiple times.

\begin{lemma}\label{lem:fewedges}
  Suppose that all vertices in a graph have degree at most
  $d$, and consider a randomly ordered $\delta$-prefix $P$.  If
  $\delta < k/d$, then the expected number of internal edges in the
  prefix is at most $O(k \card{P})$. 
\end{lemma}
\begin{proof}
  Consider a particular vertex in $P$.  Each of its neighbors
  joins the prefix with probability $< k/d$, so the expected number
  of neighbors is at most $k$.  Summing over all vertices in $P$ gives the bound. 
\end{proof}

The following related lemma states that for small prefixes, most
vertices have no incoming edges and can be removed immediately.  We
will use this lemma to bound the work incurred by processing vertices,
even those that may have already been added to the MIS or implicitly
removed from the graph.  The same sublinearity applies here.

\begin{lemma}\label{lem:fewverts} Suppose that all vertices in a graph have degree at most
  $d$, and consider a randomly ordered $\delta$-prefix $P$.  If $\delta
  \leq k/d$, then the expected number of vertices in P with at least 1
  internal edge is at most $O(k\card{P})$.
\end{lemma}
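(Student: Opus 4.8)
The plan is to mimic the proof of Lemma~\ref{lem:fewedges} almost verbatim, using linearity of expectation over the vertices together with a union bound over each vertex's neighbors. I would write the quantity of interest as $X = \sum_{v \in V} \mathbf{1}[\,v \in P \text{ and } v \text{ has at least one neighbor in } P\,]$ and compute $E[X] = \sum_{v} \Pr[v \in P \text{ and } v \text{ has a neighbor in } P]$.

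First I would fix a vertex $v$ and bound its summand by $\Pr[v \in P]\cdot \Pr[\,v \text{ has a neighbor in } P \mid v \in P\,]$. The first factor is exactly $\delta$. For the second factor, once we condition on $v \in P$ the remaining $\delta n - 1$ prefix slots form a uniformly random subset of the other $n-1$ vertices, so any fixed neighbor of $v$ lands in $P$ with probability $(\delta n - 1)/(n-1) \le \delta$; a union bound over the at most $d$ neighbors gives conditional probability at most $d\delta \le k$. Hence each summand is at most $\delta k$, and summing over all $n$ vertices yields $E[X] \le n\delta k = k\card{P} = O(k\card{P})$, since $\card{P} = \delta n$.

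Alternatively, and more cleanly, I would reduce directly to Lemma~\ref{lem:fewedges}: every vertex counted by $X$ is an endpoint of at least one internal edge, and each internal edge has only two endpoints, so $X \le 2\cdot(\text{number of internal edges})$ holds deterministically. Taking expectations and invoking Lemma~\ref{lem:fewedges} gives $E[X] \le 2\cdot O(k\card{P}) = O(k\card{P})$ at once, with no conditioning required.

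I do not anticipate a genuine obstacle, as the statement is essentially a corollary of Lemma~\ref{lem:fewedges}. The only subtlety, should I take the direct route, is the conditioning on $v \in P$: the per-neighbor inclusion probability is technically $(\delta n - 1)/(n-1)$ rather than $\delta$, but since this is bounded above by $\delta$ the argument is unaffected. Because the reduction route sidesteps even this point, I would present it as the main argument and keep the linearity-of-expectation computation only as a self-contained alternative.
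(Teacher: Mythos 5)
Your main argument---bounding the count of vertices with an internal edge by twice the number of internal edges, taking expectations, and invoking Lemma~\ref{lem:fewedges}---is exactly the paper's proof, which sets $X_V \leq 2X_E$ and concludes $E[X_V] \leq 2E[X_E] = O(k\card{P})$. Both your reduction and your self-contained linearity-of-expectation alternative are correct; the former coincides with the paper, so nothing further is needed.
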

\begin{proof}
  Let $X_E$ be the random variable denoting the number of internal
  edges in the prefix, and let $X_V$ be the random variable denoting
  the number of vertices in the prefix with at least 1 internal edge.
  Since an edge touches (only) two vertices, we have $X_V \leq 2X_E$.
  It follows that $E[X_V] \leq 2E[X_E]$, and hence $E[X_V] =
  O(k\card{P})$ from Lemma~\ref{lem:fewedges}.
\end{proof}

The preceding lemmas indicate that small-enough prefixes are very
sparse.  Choosing $k = 1/\log n$, for example, the expected size of
the subgraph induced by a prefix $P$ is $O(\card{P}/\log(n))$, and
hence it can be processed $O(\log n)$ times without exceeding linear
work.  This fact suggests the following theorem.  The implementation
given in the theorem is relatively simple.  The prefixes can be
determined as \emph{a priori} priority ranges, with lazy vertex status
updates. Moreover, each vertex and edge is only densely packed into a
new array once, with other operations being done in place on the
original vertex list.  

\begin{theorem}
  Algorithm~\ref{alg:prefix-parallel} can be implemented to run in
  $O(\log^4n)$ depth and expected $O(n+m)$ work on a common CRCW PRAM.
  The depth bound holds w.h.p.
\end{theorem}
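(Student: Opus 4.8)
The plan is to analyze Algorithm~\ref{alg:prefix-parallel} directly, combining the depth bound already established in Theorem~\ref{thm:numrounds} with a work analysis driven by the sparsity lemmas (Lemmas~\ref{lem:fewedges} and~\ref{lem:fewverts}) and the misCheck machinery of Lemma~\ref{lem:check}. First I would settle on the prefix-size schedule. To control work we must process each prefix sufficiently few times, so following the discussion preceding the theorem I would take $k = 1/\log n$, i.e. set $\delta = c/(d\log n)$ in a round where the maximum remaining degree is $d$. By Corollary~\ref{cor:degrees}, after the $i$th round the degree bound is $d \leq \Delta/2^i$, which dictates how $\delta$ grows across rounds; the key point is that within a single round the induced subgraph on the prefix has only $O(\card{P}/\log n)$ internal edges and $O(\card{P}/\log n)$ vertices with internal edges in expectation.

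Next I would account for the depth. Each round runs Algorithm~\ref{alg:fully-parallel} on its prefix, so its dependence length is governed by the per-round step bounds: Corollary~\ref{cor:logdepth} gives $O(\log n)$ steps per round for these small prefixes, and each step costs $O(\log n)$ depth by Lemma~\ref{lem:worklist}/Lemma~\ref{lem:check}. Because we now shrink the degree by a factor of two only after accumulating enough prefix mass, and the smaller $k = 1/\log n$ prefixes are a factor $\log n$ smaller than in Theorem~\ref{thm:numrounds}, the number of rounds increases to $O(\log^2 n)$: we need roughly $\log n$ prefixes of size $\Theta(1/(d\log n))$ to process the $\Theta(1/d)$ mass that halves the degree, repeated over $\log\Delta$ degree scales. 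Multiplying $O(\log^2 n)$ rounds by $O(\log n)$ steps by $O(\log n)$ depth per step yields the claimed $O(\log^4 n)$ depth, holding w.h.p.\ after a union bound over the (polynomially many) rounds of the degree-reduction and longest-path events.

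For the work bound I would separate edges into external and internal as defined before the lemma. Every external edge incident on a prefix is touched only once across the whole execution—when its lower-priority endpoint is finally resolved—so these contribute $O(m)$ total. The internal edges of a prefix may be reprocessed on every step within that round, i.e.\ up to $O(\log n)$ times, but Lemma~\ref{lem:fewedges} bounds their expected number by $O(\card{P}/\log n)$, so the expected reprocessing cost per round is $O(\card{P})$; summing $\card{P}$ over all rounds telescopes to $O(n)$ since the prefixes partition the vertices. The analogous vertex-touching cost is controlled identically by Lemma~\ref{lem:fewverts}, and the misCheck operations are charged via Lemma~\ref{lem:check} so that $l$ checks plus edge traversals cost $O(l + m)$. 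Combining these gives expected $O(n+m)$ total work.

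The main obstacle I expect is reconciling the depth accounting with the per-round degree-reduction schedule when the prefixes are deliberately a factor $\log n$ smaller than the ones used in Theorem~\ref{thm:numrounds}. With the larger prefixes of that theorem a single round sufficed to halve the degree, giving $\log\Delta$ rounds; here, because each prefix carries only $\Theta(1/\log n)$ of the mass needed per degree scale, I must argue carefully that $\Theta(\log n)$ consecutive small prefixes collectively achieve the same degree halving guaranteed by Lemma~\ref{lem:degrees}, and that the w.h.p.\ degree and longest-path guarantees survive a union bound over all $O(\log^2 n)$ rounds. Getting the bookkeeping on the number of rounds exactly right—and ensuring the telescoping $\sum \card{P} = O(n)$ genuinely holds even as vertices are removed out of order—is the delicate part; the rest reduces to the sparsity and misCheck lemmas already in hand.
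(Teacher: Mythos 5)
Your work analysis is essentially the paper's: distinguish internal from external edges, charge external edges once ($O(m)$ total), use Lemmas~\ref{lem:fewedges} and~\ref{lem:fewverts} with $k=1/\log n$ to make the expected induced subgraph size $O(\card{P}/\log n)$ so that $O(\log n)$-fold reprocessing costs $O(\card{P})$ per round, and telescope $\sum\card{P}=O(n)$. But your depth accounting contains a genuine gap, and the claimed $O(\log^4 n)$ only comes out because two errors of a $\log n$ factor each cancel. First, your round count of $O(\log^2 n)$ rests on the assertion that ``$\Theta(1/d)$ mass halves the degree,'' which is unsupported: Lemma~\ref{lem:degrees} gives failure probability $n/e^\ell$, so the w.h.p.\ degree-halving guarantee requires $\ell=\Omega(\log n)$, i.e.\ an $\Omega(\log(n)/d)$-prefix per degree scale (this is exactly Corollary~\ref{cor:degrees}). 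With rounds of size $\Theta(1/(d\log n))$ you therefore need $\Theta(\log^2 n)$ rounds per degree scale, hence $O(\log^3 n)$ rounds total---the paper makes this explicit by grouping into $O(\log n)$ superrounds, each an $O(\log(n)/d)$-prefix split into $\log^2 n$ rounds. The ``main obstacle'' you flag at the end is real, but your proposed resolution ($\Theta(\log n)$ small prefixes per halving) is off by a $\log n$ factor and cannot be rescued by Lemma~\ref{lem:degrees}.

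Second, you charge $O(\log n)$ depth per step via the misCheck machinery of Lemmas~\ref{lem:check} and~\ref{lem:worklist}, giving $O(\log^2 n)$ depth per round; combined with the correct $O(\log^3 n)$ round count this yields $O(\log^5 n)$, which misses the stated bound. The paper avoids this by a different per-round implementation: after resolving external edges and zero-internal-degree vertices in place, it uses prefix sums to densely pack the surviving induced subgraph $G[P']$ (once per round, $O(\log n)$ depth, linear work) and then runs a \emph{naive} implementation of Algorithm~\ref{alg:fully-parallel} on the packed arrays, which costs $O(1)$ depth per step and $O(\card{G[P']}\cdot D)$ work, with $D=O(\log n)$ by Corollary~\ref{cor:logdepth}. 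That makes each round $O(\log n)$ depth, and $O(\log^3 n)$ rounds times $O(\log n)$ gives the claimed $O(\log^4 n)$. Note also that the misCheck charging scheme is designed for the single global root-set traversal of Lemma~\ref{lem:worklist}, where each edge is retired permanently; it does not directly transfer to repeated per-prefix processing, whereas the naive-on-packed-subgraph approach needs no such charging because the sparsity lemmas already pay for the reprocessing.
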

\begin{proof}
  This implementation updates vertex status (entering the MIS or
  removed due to a neighbor) only when that vertex is part of a
  prefix.  

  Group the rounds into $O(\log n)$ \emph{superrounds}, each
  corresponding to an
  $O(\log(n)/d)$-prefix. Corollary~\ref{cor:degrees} states that all
  superrounds reduce the maximum degree sufficiently, w.h.p.  This
  prefix, however, may be too dense, so we divide each superround into
  $\log^2 n$ rounds, each operating on a $O((1/d)(1/\log(n))$-prefix
  $P$.  To implement a round, first process all external edges to
  remove those vertices with higher-priority MIS neighbors.  Then
  accept any remaining vertices with no internal edges into the MIS.
  These preceding steps are performed on the original vertex/edge
  lists, processing edges incident on the prefix a constant number of
  times.  Let $P' \subseteq P$ be the set of prefix vertices that
  remain at this point.  Use prefix sums to count the number of
  internal edges for each vertex (which can be determined by comparing
  priorities), and densely pack $G[P']$ into new arrays.  This packing
  has $O(\log n)$ depth and linear work.  Finally, process the induced
  subgraph $G[P']$ using a naive implementation of
  Algorithm~\ref{alg:fully-parallel}, which has depth $O(D)$ and work
  equal to $O(\card{G[P']} \cdot D)$, where $D$ is the \ddepth{} of $P'$.  From Corollary~\ref{cor:logdepth}, $D = O(\log n)$
  with high probability. Combining this with expected prefix size of
  $E[\card{G[P']}] = O(\card{P}/\log n)$ from
  Lemmas~\ref{lem:fewedges} and~\ref{lem:fewverts} yields expected
  $O(\card{P})$ work for processing the prefix.  Summing across all
  prefixes implies a total of $O(n)$ expected work for
  Algorithm~\ref{alg:fully-parallel} calls plus $O(m)$ work in the
  worst case for processing external edges.  Multiplying the $O(\log
  n)$ prefix depth across all $O(\log^3 n)$ rounds completes the proof
  for depth.
\end{proof}

\vspace{-0.15in}
\section{Maximal Matching}

One way to implement maximal matching (MM) is to reduce it to MIS by
replacing each edge with a vertex, and create an edge between all
adjacent edges.  This reduction, however, can significantly increase
the number of edges in the graph and therefore will not take work that
is linear in the size of the original graph.  Instead a standard
greedy sequential algorithm is to process the edges in an arbitrary
order and include the edge in the MM if and only if no neighboring
edge on either side has already been added.  
As with the vertices in
the greedy MIS algorithms, edges can be processed out of order when
they don't have any earlier neighboring edges.  This idea leads to
Algorithm~\ref{alg:parallel-mm} where $\pi$ is now an ordering of the
edges.

\begin{algorithm}
\caption{Parallel greedy algorithm for maximal matching} \label{alg:parallel-mm}
\begin{algorithmic}[1]
\Procedure{Parallel-greedy-MM}{ $G=(V,E)$, $\pi$ }
\If{ $|E| = 0$ } \Return $\O$
\Else
\State let $W$ be the set of edges in $E$ with no adjacent edges with higher priority by $\pi$
\State $E' = E \setminus (E \cup  N(E))$
\State \Return $W$ $\cup$ \Call{Parallel-greedy-MM}{ $G[E']$, $\pi$ }
\EndIf
\EndProcedure
\end{algorithmic}
\end{algorithm}
\vspace{-0.15in}

\begin{lemma}\label{thm:mm-numrounds}
  For a random ordering on edges, the number of rounds of
  Algorithm~\ref{alg:parallel-mm} is $O(\log^2 m)$ w.h.p.
\end{lemma}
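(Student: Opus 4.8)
The plan is to reduce this statement to the MIS result already established in Theorem~\ref{thm:numrounds} by passing to the line graph. Recall that the line graph $L(G)$ has one vertex for each edge of $G$, and two vertices of $L(G)$ are adjacent exactly when the corresponding edges of $G$ share an endpoint. Under this correspondence, two edges of $G$ are ``neighboring'' in the sense used by Algorithm~\ref{alg:parallel-mm} precisely when their images are adjacent in $L(G)$, and a random ordering $\pi$ on the edges of $G$ is identical to a random ordering on the vertices of $L(G)$.

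The first step is to verify that Algorithm~\ref{alg:parallel-mm} run on $G$ with ordering $\pi$ behaves identically, round for round, to Algorithm~\ref{alg:fully-parallel} run on $L(G)$ with the induced vertex ordering. In each round, the set $W$ of edges with no higher-priority adjacent edge is exactly the set of roots of the \pdag{} of $L(G)$, and deleting $W$ together with its neighbors $N(W)$ is exactly the root-removal step of the parallel MIS algorithm. Hence the number of rounds of Algorithm~\ref{alg:parallel-mm} equals the \ddepth{} of the \pdag{} of $L(G)$, and it suffices to bound the latter.

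The second step is to apply Theorem~\ref{thm:numrounds} to $L(G)$. The line graph has $m$ vertices, so the ``$n$'' of that theorem becomes $m$ here, and ``with high probability'' becomes $1 - 1/m^c$, which is what the statement asks for. It remains to bound the maximum degree $\Delta'$ of $L(G)$: an edge $(u,v)$ is adjacent to at most $(d_u - 1) + (d_v - 1) < 2\Delta$ other edges, where $\Delta$ is the maximum degree of $G$, so $\Delta' < 2\Delta \le 2m$. Substituting into the theorem's bound of $O(\log \Delta' \log m)$ and using $\log \Delta' = O(\log m)$ yields $O(\log^2 m)$ rounds w.h.p.

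I expect no serious obstacle, since the heavy lifting is done by Theorem~\ref{thm:numrounds}. The only points requiring care are confirming the equivalence of the two processes at the level of individual rounds, so that the \ddepth{} bound transfers exactly rather than merely up to constants, and tracking that the probability and degree parameters are correctly re-expressed in terms of $m$ rather than $n$. I would also emphasize that the line graph is introduced purely for the analysis and is never materialized by the algorithm, so the blow-up in edge count that makes the direct reduction work-inefficient is irrelevant to this purely combinatorial argument about the number of rounds.
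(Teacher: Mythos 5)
Your proposal is correct and follows essentially the same route as the paper: the paper's proof likewise observes that Algorithm~\ref{alg:parallel-mm} simulates Algorithm~\ref{alg:fully-parallel} on the line graph step for step (an edge is added or deleted in exactly the same step as the corresponding line-graph vertex) and then invokes Theorem~\ref{thm:numrounds}. Your write-up merely makes explicit details the paper leaves implicit, such as the round-for-round equivalence, the degree bound $\Delta' < 2\Delta \le 2m$, and the re-expression of the high-probability bound in terms of $m$.
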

\begin{proof}
This follows directly from the reduction to MIS described above.  In
particular an edge is added or deleted in
Algorithm~\ref{alg:parallel-mm} exactly on the same step it would be
for the corresponding MIS graph in Algorithm~\ref{alg:fully-parallel}.
Therefore Lemma~\ref{thm:numrounds} applies.
\end{proof}

We now show how to implement Algorithm~\ref{alg:parallel-mm}
in linear work.  
As with the algorithm used in Lemma~\ref{lem:worklist} we can maintain
on each round an array of roots (edges that have no neighboring edges
with higher priority) and use them to both delete edges and generate
the root set for the next round.  However, we cannot afford to look at
all the neighbors' neighbors.  Instead we maintain for each vertex an
array of its incident edges sorted by priority.  This is
maintained lazily such that deleting an edge only marks it as deleted
and does not immediately remove it from its two incident vertices.  We
say an edge is \defn{ready} if it has no remaining neighboring edges
with higher priority.  We use a \defn{mmcheck} procedure on a vertex to
which determine if any incident edge is ready and identifies the edge
if so---a vertex can have at most one ready incident edge.  We assume
mmchecks do not happen in parallel with marking edges as deleted.

\begin{lemma}\label{lem:check-mm}
  For a graph with $m$ edges and $n$ vertices where edges are
  marked as deleted over time, any set of $l$ mmcheck
  operations can be done in $O(l + m)$ total work, and each operation
  in $O(\log m)$ depth.
\end{lemma}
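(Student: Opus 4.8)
The plan is to mirror the proof of Lemma~\ref{lem:check} as closely as possible, since the statement of Lemma~\ref{lem:check-mm} is essentially the edge-analogue of that vertex lemma, with the same $O(l+m)$ work bound and an $O(\log m)$ (rather than $O(\log n)$) depth per operation. The key structural difference is that for MIS a misCheck only needs to scan a vertex's \emph{parents} (higher-priority neighbors) looking for a surviving one, whereas for MM an mmcheck on a vertex must determine whether that vertex has a \emph{ready} incident edge, where ready means the edge has no surviving higher-priority neighboring edge on \emph{either} endpoint. So the first thing I would do is set up the data structure: for each vertex keep its incident edges in an array sorted by priority (highest first), maintained lazily so that deleting an edge only marks it deleted without splicing it out of the two incident vertices' arrays.

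First I would establish the amortized-work argument by a charging scheme identical in spirit to Lemma~\ref{lem:check}. When we mmcheck a vertex, we walk its sorted incident-edge array from the highest-priority end, skipping over edges that are already marked deleted. Each deleted edge we skip is removed from further consideration by advancing a start-pointer into the array, and its constant cost is charged to that edge; since an edge has two endpoints and can be charged at most once per endpoint, the total work charged to edges over the whole execution is $O(m)$. When we reach the first non-deleted edge, the check terminates and we charge its constant cost to the mmcheck operation itself, giving $O(l)$ across all $l$ checks. Summing yields $O(l+m)$ total work, exactly as in the misCheck case.

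Next I would handle the $O(\log m)$ depth per operation using the same doubling trick as in Lemma~\ref{lem:check}: rather than scanning the sorted array linearly, examine the first incident edge, then the next two, then the next four, and so on, until we find the highest-priority surviving edge. This finds the relevant edge in $O(\log m)$ steps (an edge list has length at most $m$), and within a factor of two we can charge all the work done in the doubling phase to the deleted edges it passed over, preserving the work bound. The only genuinely new wrinkle compared to Lemma~\ref{lem:check} is the definition of ``ready'': finding the highest-priority surviving incident edge $e$ at vertex $v$ does not by itself certify that $e$ is ready, because $e$ might still have a surviving higher-priority neighbor at its \emph{other} endpoint $u$. I would argue that a single mmcheck at $v$ is responsible only for identifying $v$'s highest-priority surviving incident edge as a \emph{candidate}, and that an edge $e=(u,v)$ is ready precisely when it is simultaneously the highest-priority surviving edge at both $u$ and $v$; this is consistent with the claim that a vertex can have at most one ready incident edge, since readiness requires $e$ to be the top surviving edge at $v$. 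The assumption stated before the lemma---that mmchecks do not run concurrently with edge deletions---is what lets us treat the ``marked deleted'' status as fixed during each check, so the sorted-array scan is well defined.

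I expect the main obstacle to be bookkeeping rather than any deep difficulty: the charging argument must ensure an edge is not double-charged in a way that breaks the $O(m)$ bound when it is skipped from \emph{both} of its endpoints' arrays, and the depth argument's doubling must be reconciled with the amortized charging (as in Lemma~\ref{lem:check}, the factor-of-two slack in doubling is absorbed into the per-edge charges). Since these are precisely the points already handled successfully in the proof of Lemma~\ref{lem:check}, and the sorted-by-priority ordering only makes the ``find the top surviving edge'' step cleaner than the unordered parent scan there, I expect the argument to go through with the edge-based charging replacing the parent-based charging and $n$ replaced by $m$ in the depth bound.
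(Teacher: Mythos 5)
Your setup, charging scheme, and doubling argument all match the paper's proof, but you stop one step short of what the lemma (and its later use) requires, and the way you sidestep that step is a genuine gap. The paper's mmcheck is a \emph{two-phase} operation: phase one finds the highest-priority surviving edge $e$ incident on the checked vertex $v$ (exactly your scan-with-doubling, charging skipped deleted edges either to the edge being removed or to the check itself), and phase two then runs the same doubling scan on the \emph{other} endpoint of $e$ to test whether $e$ is also the highest-priority surviving edge there, returning $e$ only if so. You correctly characterize readiness as being top at both endpoints, but you then redefine mmcheck so that it merely reports $v$'s top surviving edge as a ``candidate,'' leaving readiness to be certified by the caller. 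That weaker operation does not prove the lemma as stated (``determine if any incident edge is ready and identifies the edge if so''), and it would break the application in Lemma~\ref{lem:work1}: there, when an edge $e=(x,y)$ becomes ready in some step, only the endpoint incident on a just-deleted neighboring edge is guaranteed to receive a check in that step; all of $y$'s higher-priority edges may have been deleted in \emph{earlier} steps, so no check is triggered at $y$, and pairing up candidates across endpoints would never certify $e$ as ready. The second phase inside the mmcheck itself is what makes the algorithm correct.

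The fix is exactly the paper's phase two, and your machinery already covers its cost: the scan of the other endpoint's sorted array uses the same doubling, permanently advances past deleted edges, and charges them as before --- each edge is charged $O(1)$ times per endpoint, so the $O(l+m)$ total work and $O(\log m)$ per-operation depth are unaffected. With that phase restored, your argument coincides with the paper's proof.
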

\begin{proof}
The mmcheck is partitioned into two phases.  The first identifies
the highest priority incident edge that remains, and the second checks
if that edge is also the highest priority on its other endpoint and
returns it if so.  The first phase can be done by scanning the edges
in priority order removing those that have been deleted and stopping
when the first non-deleted edge is found.  As in Lemma~\ref{lem:check}
this can be done in parallel using doubling in $O(\log m)$ depth, and
the work can be charged either to a deleted edge, which is removed, or
the check itself.  The total work is therefore $O(l + m)$.  The second
phase can similarly use doubling to see if the highest priority edge
is also the highest priority on the other side.
\end{proof}

\begin{lemma}\label{lem:work1} 
  Given a graph with $m$ edges, $n$ vertices, and a random permutation on the edges
  $\pi$, Algorithm~\ref{alg:parallel-mm} can be implemented on a CRCW
  PRAM in $O(m)$ total work and $O(\log^3 m)$ depth with high
  probability.
\end{lemma}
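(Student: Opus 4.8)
The plan is to mirror the root-maintenance implementation of Algorithm~\ref{alg:fully-parallel} developed in Lemma~\ref{lem:worklist}, substituting \emph{ready} edges for root vertices and the \textbf{mmcheck} procedure of Lemma~\ref{lem:check-mm} for \textbf{misCheck}. First I would preprocess the graph by building, for each vertex, the array of its incident edges sorted by priority $\pi$, which is exactly the representation that mmcheck assumes. Since $\pi$ is a permutation of $\{1,\dots,m\}$, these $2m$ entries can be arranged by integer sorting in $O(m)$ work and depth within the $O(\log^3 m)$ budget. Edges are deleted lazily (marked, not physically removed), matching the hypothesis of Lemma~\ref{lem:check-mm}, and I would keep the rounds of the implementation in one-to-one correspondence with the recursive calls of Algorithm~\ref{alg:parallel-mm} so that the round bound of Lemma~\ref{thm:mm-numrounds} applies directly.

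The algorithm then proceeds in rounds, maintaining an array $R$ of the currently ready edges (the roots of the edge priority DAG). In each round I add every edge in $R$ to the matching and lazily mark each such edge, together with all edges sharing an endpoint with it, as deleted. Each vertex $w$ that loses at least one incident edge this round may acquire a new highest-priority surviving incident edge, so I perform a single mmcheck at $w$ to test whether it now has a ready edge. To avoid issuing several checks at the same $w$ when $w$ is adjacent to multiple matched edges, I would let the deletions contend for $w$ via an arbitrary concurrent write and let only the winning writer issue the mmcheck; the newly discovered ready edges are packed by prefix sums into the array $R$ for the next round. The initial ready set is obtained by one mmcheck at each of the at most $2m$ vertices that have an incident edge.

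For the work bound I would charge as follows. Every edge is marked deleted at most once over the whole execution, so the total number of deletions, and hence (after deduplication) the total number $l$ of mmcheck operations issued across all rounds, is $O(m)$, plus the $O(m)$ initial checks. Applying Lemma~\ref{lem:check-mm} once to this global collection gives $O(l+m)=O(m)$ total work for all checks, the additive $O(m)$ accounting for each edge being scanned-and-removed at most once. Marking edges, the write contention for vertices, and the prefix-sum packing each cost $O(m)$ in total because each edge and each newly ready edge is handled a constant number of times. Together with the $O(m)$ preprocessing, this yields $O(m)$ total work. For depth, Lemma~\ref{thm:mm-numrounds} gives $O(\log^2 m)$ rounds w.h.p.; within a round the dominant costs are the mmchecks, which run in $O(\log m)$ depth by Lemma~\ref{lem:check-mm}, and the packing, also $O(\log m)$ depth, so the product is $O(\log^3 m)$ depth w.h.p.

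The main obstacle I anticipate is the deduplication argument: I must ensure both that a vertex incident to several matched edges in a single round triggers exactly one mmcheck, and that the \emph{global} count of checks is genuinely $O(m)$ rather than $O(m)$ per round, since the latter would inflate the work by a $\log^2 m$ factor. The arbitrary-write contention handles the former, and the once-per-edge deletion accounting handles the latter. The remaining care is verifying that a freshly exposed ready edge is detected in the round immediately following the deletion that exposes it, so that the implementation rounds coincide with the recursive calls of Algorithm~\ref{alg:parallel-mm} and the $O(\log^2 m)$ bound of Lemma~\ref{thm:mm-numrounds} transfers without loss.
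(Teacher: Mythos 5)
Your proposal is correct and follows essentially the same route as the paper's proof: maintain the array of ready edges across the $O(\log^2 m)$ rounds guaranteed by Lemma~\ref{thm:mm-numrounds}, use Lemma~\ref{lem:check-mm} to charge all mmchecks against once-deleted edges for $O(m)$ total work, and multiply the $O(\log m)$ per-round depth by the round count to get $O(\log^3 m)$. Your added details (arbitrary-concurrent-write deduplication of checks, integer sorting where the paper invokes bucket sorting on random priorities) are minor refinements of steps the paper states more briefly (``redundant edges can easily be removed'').
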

\begin{proof}
  Since the edge priorities are selected at random, the initial sort
  to order the edges incident on each vertex can be done in $O(m)$
  work and within our depth bounds w.h.p. using bucket
  sorting~\cite{CLRS}.  Initially the set of ready
  edges are selected by a using an mmcheck on all edges.  On each
  step of Algorithm~\ref{alg:parallel-mm} we delete the set of
  ready edges and their neighbors (by marking them), and then check
  all vertices incident on the far end of each of the deleted
  neighboring edges.  This returns the new set of ready edges in
  $O(\log m)$ depth.  Redundant edges can easily be removed.  Thus the
  depth per step is $O(\log m)$ and by
  Lemma~\ref{thm:mm-numrounds} the total depth is $O(\log^3 m)$.
  Every edge is deleted once and the total number of checks is $O(m)$, so the total work is $O(m)$.
\end{proof}

\vspace{-0.15in}
\section{Experiments}
We performed experiments of our algorithms using varying prefix sizes,
and show how prefix size affects work, parallelism, and overall
running time. We also compare the performance of our prefix-based algorithms with sequential implementations and additionally for MIS we compare with an implementation of Luby's algorithm. For our experiments we use two graph inputs---a sparse
random graph with $10^7$ vertices and $5 \times 10^7$ edges and an rMat
graph with $2^{24}$ vertices and $5 \times 10^7$ edges. The rMat
graph~\cite{ChakrabartiZF04} has a power-law distribution of degrees.

We ran our experiments on a 32-core (with hyper-threading) Dell
PowerEdge 910 with $4\times 2.26\mbox{GHZ}$ Intel 8-core X7560 Nehalem
Processors, a 1066MHz bus, and 64\mbox{GB} of main memory. The
parallel programs were compiled using the \texttt{cilk++} compiler
(build 8503) with the \texttt{-O2} flag.  The sequential programs were
compiled using \texttt{g++} 4.4.1 with the \texttt{-O2} flag.

For both MIS and MM, we observe that, as expected, increasing the
prefix size increases both the total work performed
(Figures~\ref{fig:misWorkRand},~\ref{fig:misWorkRMat},~\ref{fig:mmWorkRand}
and~\ref{fig:mmWorkRMat}) and the parallelism, which is estimated by
the number of rounds of the outer loop (selecting prefixes) the
algorithm takes to complete
(Figures~\ref{fig:misRoundsRand},~\ref{fig:misRoundsRMat},~\ref{fig:mmRoundsRand}
and~\ref{fig:mmRoundsRMat}). As expected, the total work performed and
the number of rounds taken by a sequential implementation are both
equal to the input size. By examining the graphs of running time
vs. prefix size
(Figures~\ref{fig:misTimeRand},~\ref{fig:misTimeRMat},~\ref{fig:mmTimeRand}
and~\ref{fig:mmTimeRMat}) we see that there is some optimal prefix
size between 1 (fully sequential) and the input size (fully parallel).
In the running time vs. prefix size graphs, there is a small bump when
the prefix-to-input size ratio is between $10^{-6}$ and $10^{-4}$
corresponding to the point when the for-loop in our implementation
transitions from sequential to parallel (we used a grain size of 256
for our loops).

We also compare our prefix-based algorithms to optimized sequential
implementations, and additionally for MIS we compare with an optimized
implementation of Luby's algorithm. We tried different implementations of Luby's algorithm and report the times for the fastest one. For MIS, our prefix-based
implementation using the optimal prefix size obtained from experiments
(see Figures~\ref{fig:misTimeRand} and~\ref{fig:misTimeRMat})
is 4--8 times faster than Luby's algorithm (shown in
Figures~\ref{fig:misPrefixVsProcsRand}
and~\ref{fig:misPrefixVsProcsRMat}), which essentially processes the
entire input as a prefix (along with reassigning the priorities of
vertices between rounds which the deterministic prefix-based algorithm
does not do). This demonstrates that our prefix-based approach,
although sacrificing some parallelism, leads to less overall work and
lower running time. When using more than 2 processors, our prefix-based
implementation of MIS outperforms the serial version, while our
implementation of Luby's algorithm requires 16 or more processors to
outperform the serial version. The prefix-based algorithm achieves 14--17x speedup on 32 processors. For MM, our prefix-based algorithm
outperforms the corresponding serial implementation with 4 or more
processors and achieves 21--24x speedup on 32 processors (Figures~\ref{fig:mmPrefixVsProcsRand}
and~\ref{fig:mmPrefixVsProcsRMat}). We note that since the serial MIS
and MM algorithms are so simple, it is not easy for a parallel
implementation to outperform the corresponding serial implementation.

\section{Conclusion}
We have shown that the ``sequential'' greedy algorithms for MIS and MM have polylogarithmic depth, for randomly ordered inputs (vertices for MIS and edges for MM). This gives random lexicographically first solutions for both of these problems, and in addition has important practical implications such as giving faster implementations and guaranteeing determinism. Our prefix-based approach leads to a smooth tradeoff between parallelism and total work and by selecting a good prefix size, we show experimentally that indeed our algorithms achieve very good speedup and outperform their serial counterparts using only a modest number of processors.

We believe that our approach can be applied to sequential greedy algorithms for other problems (e.g. spanning forest) and this is a direction for future work. An open question is whether the dependence length of our algorithms can be improved to $O(\log n)$.

\begin{figure*}[h]
  \centering
  \subfigure[Total work done vs. prefix size on a {\bf random graph} 
    ($n = 10^7, m = 5 \times 10^7$) ]{
    \hspace{-.015\linewidth}%
    \includegraphics[width=0.35\linewidth]{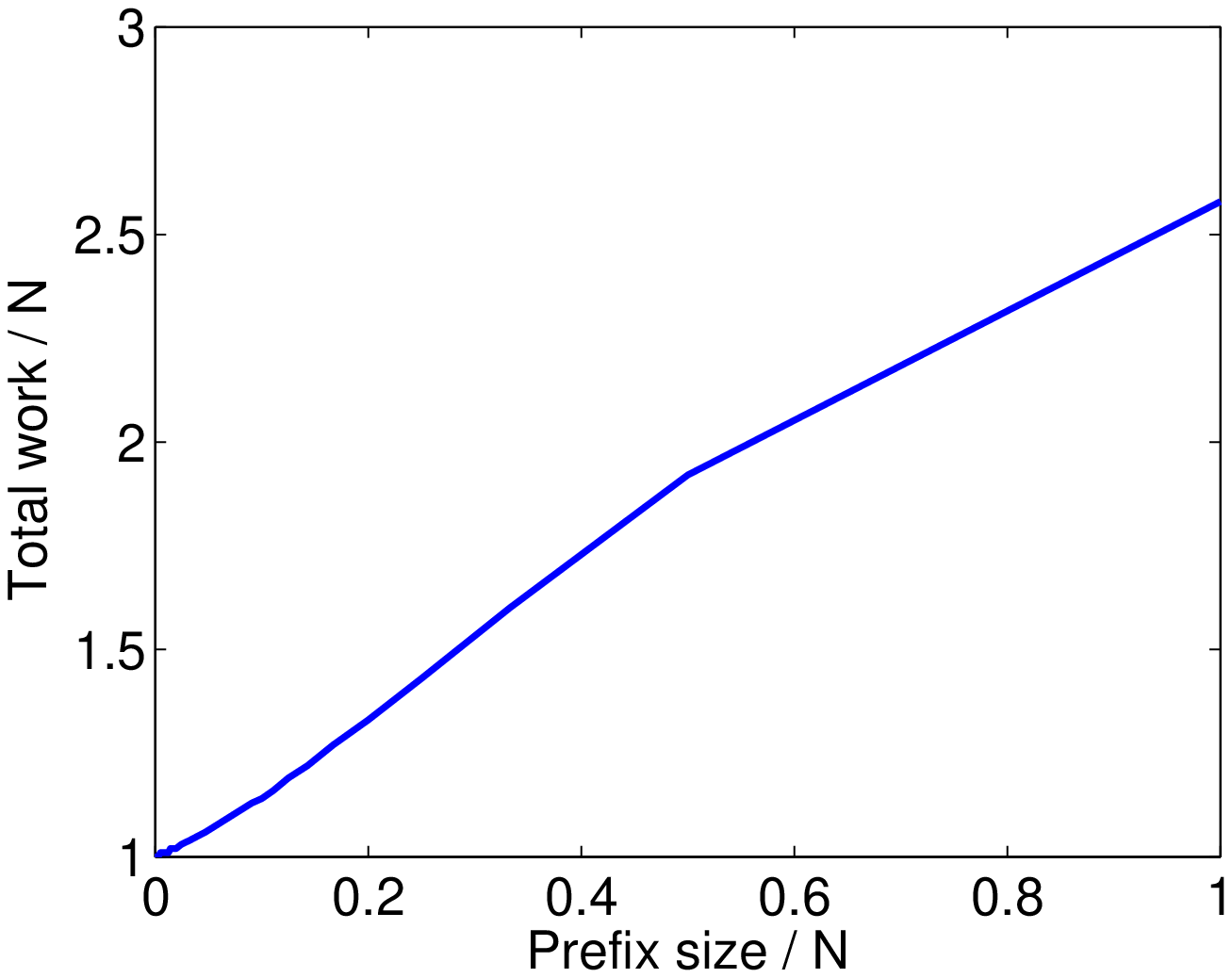}
    \hspace{-.03\linewidth}
    \label{fig:misWorkRand}
  } \hfill
  \subfigure[Number of rounds vs. prefix size on a {\bf random graph} 
     ($n = 10^7, m = 5 \times 10^7$) in log-log scale]{
    \hspace{-.015\linewidth}%
    \includegraphics[width=0.35\linewidth]{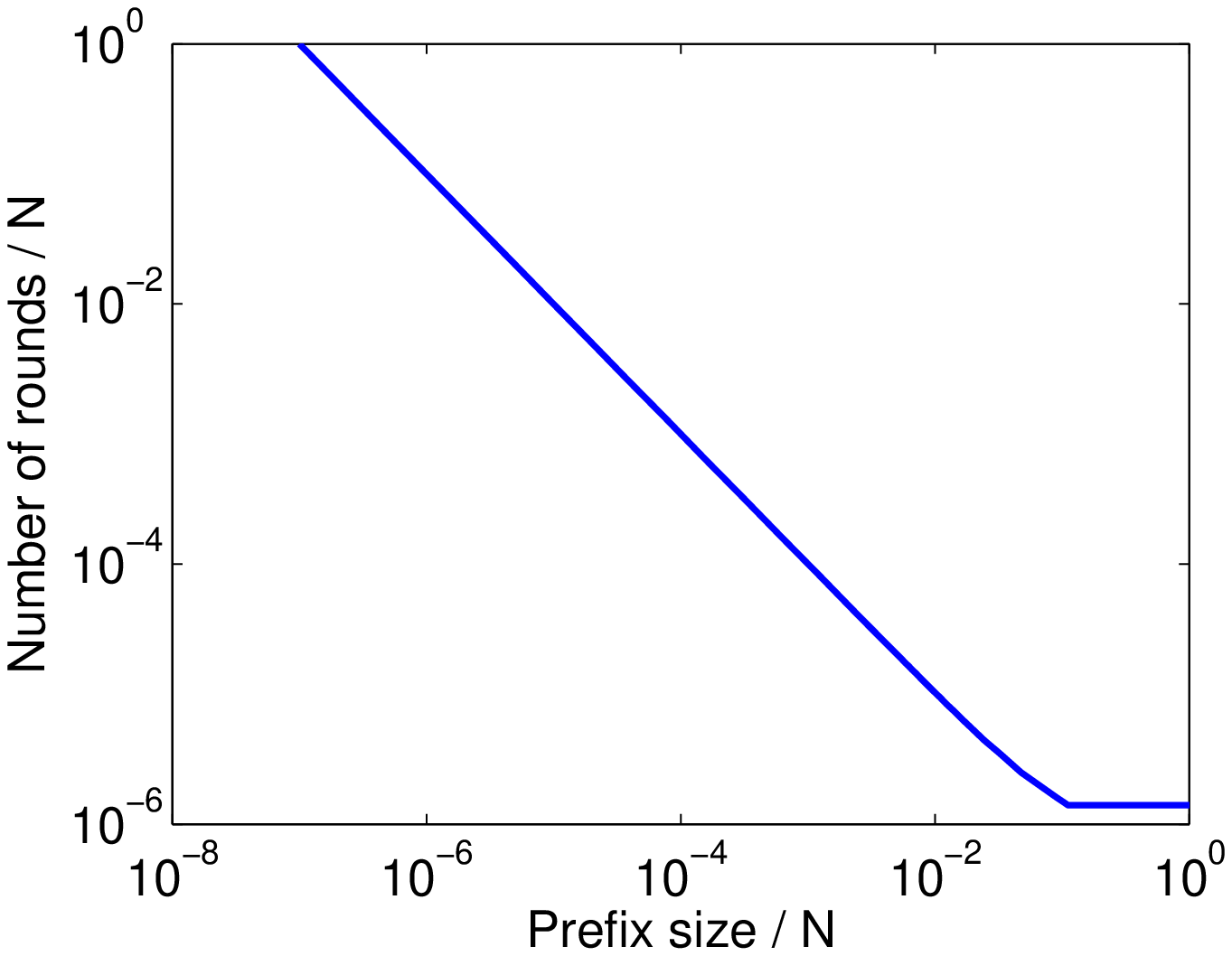}
    \hspace{-.03\linewidth}
    \label{fig:misRoundsRand}
  } \hfill
  \subfigure[Running time (32 processors) vs. prefix size on a {\bf random graph} 
	   ($n = 10^7, m = 5 \times 10^7$) in log-log scale]{
    \hspace{-.015\linewidth}%
    \includegraphics[width=0.35\linewidth]{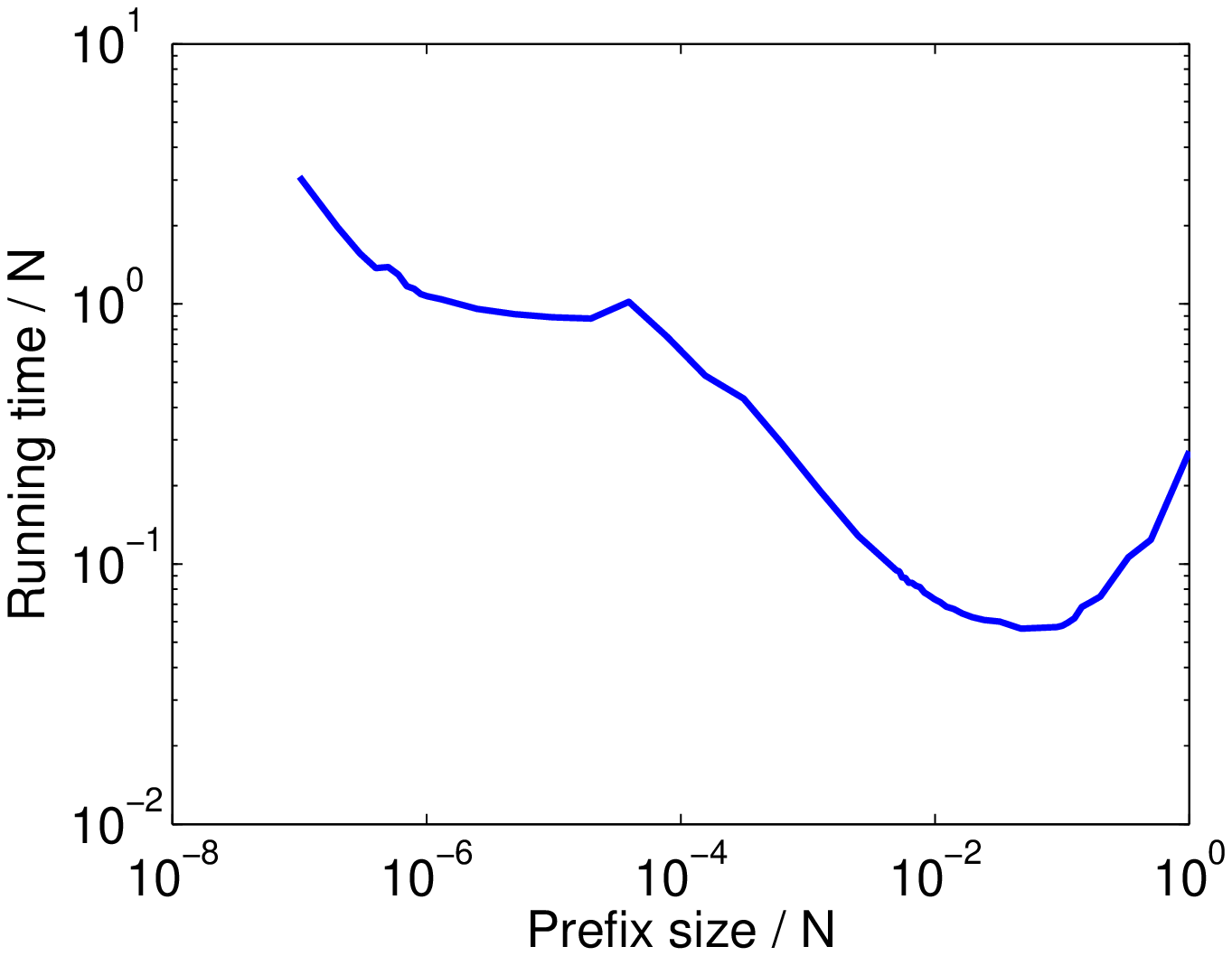}
    \hspace{-.03\linewidth}
    \label{fig:misTimeRand}
  }
  \subfigure[Total work done vs. prefix size on a {\bf rMat graph} ($n = 2^{24}, m = 5 \times 10^7$)]{
    \hspace{-.015\linewidth}%
    \includegraphics[width=0.35\linewidth]{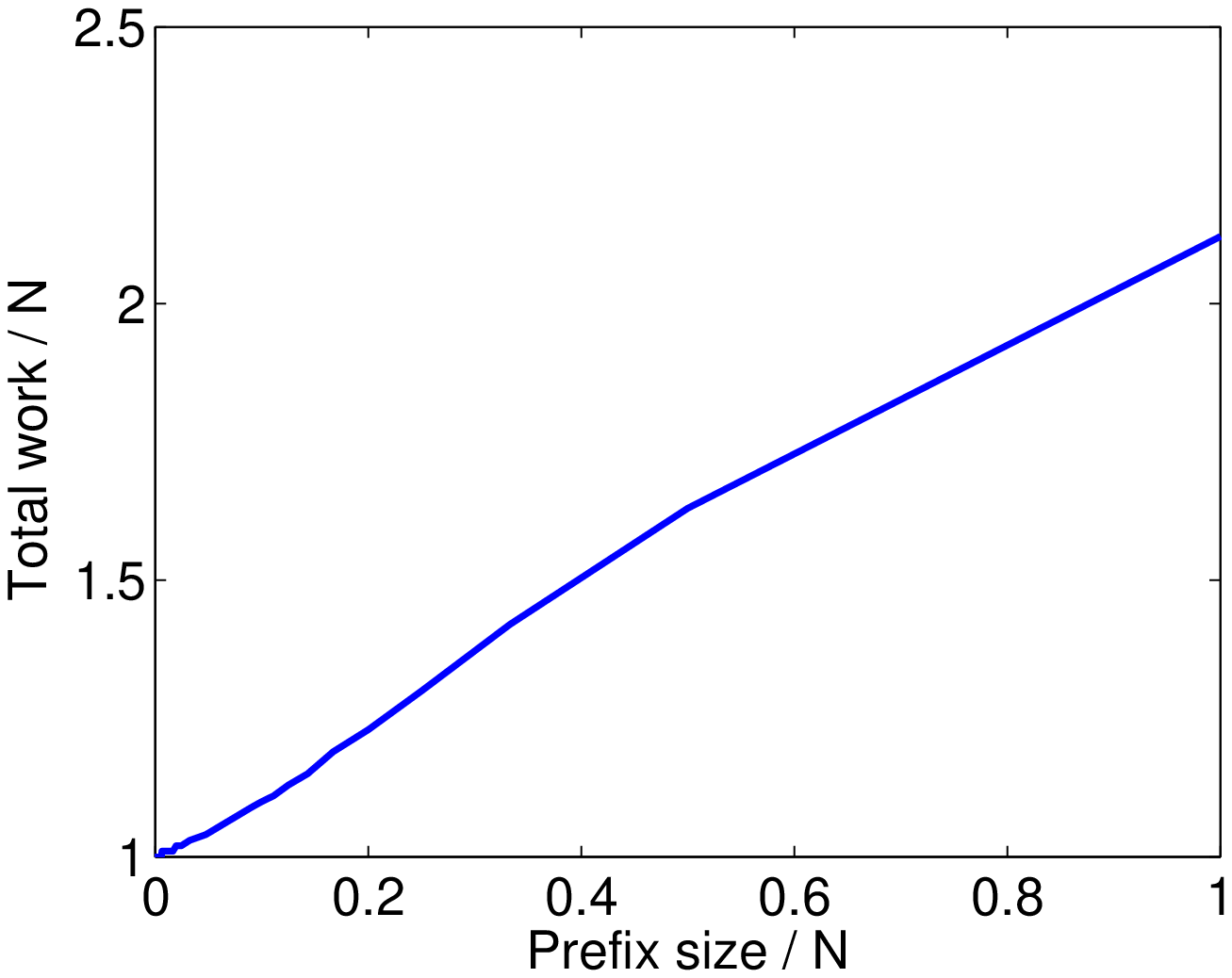}
    \hspace{-.03\linewidth}
    \label{fig:misWorkRMat}
  } \hfill
  \subfigure[Number of rounds vs. prefix size on a {\bf rMat graph} ($n = 2^{24}, m = 5 \times 10^7$) in log-log scale]{
    \hspace{-.015\linewidth}%
    \includegraphics[width=0.35\linewidth]{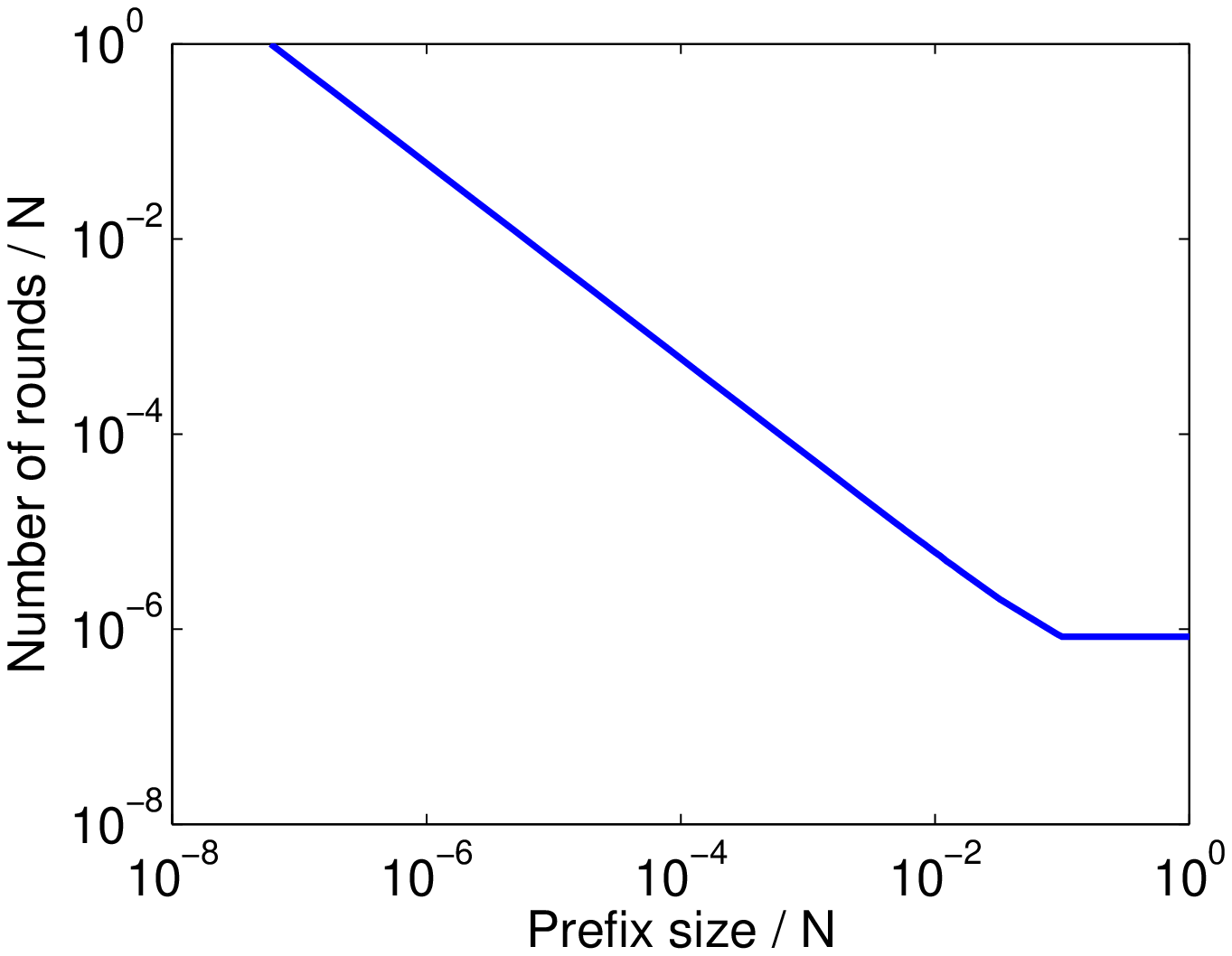}
    \hspace{-.03\linewidth}
    \label{fig:misRoundsRMat}
  } \hfill
  %
  %
  \subfigure[Running time (32 processors) vs. prefix size on a {\bf rMat graph} ($n = 2^{24}, m = 5 \times 10^7$) in log-log scale]{
    \hspace{-.015\linewidth}%
    \includegraphics[width=0.35\linewidth]{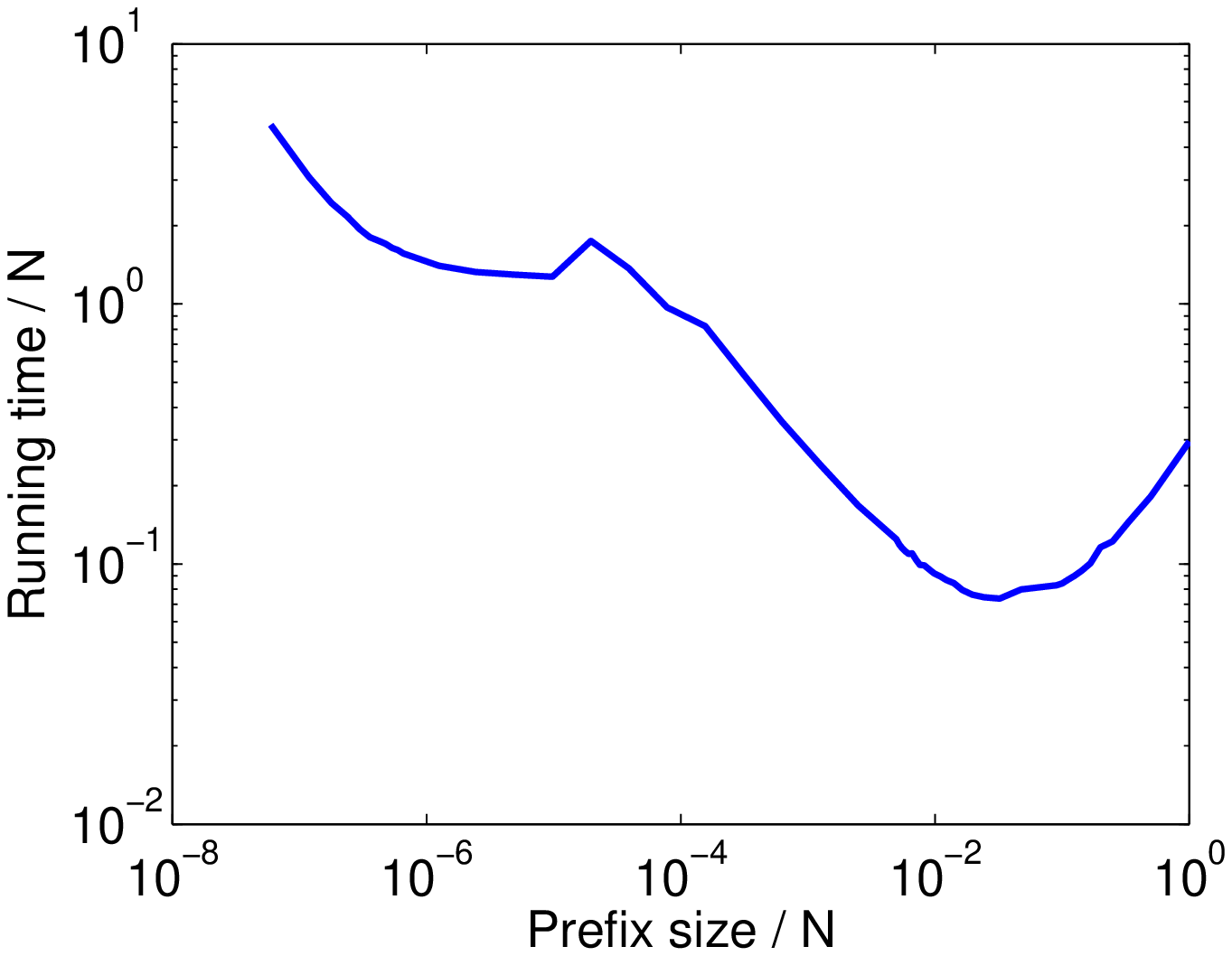}
    \hspace{-.03\linewidth}
    \label{fig:misTimeRMat}
  }
  \caption{Plots showing the tradeoff between various properties and the prefix size in maximal independent set.}
\end{figure*}

\begin{figure*}[h]
  \centering
  \subfigure[Total work done vs. prefix size on a {\bf random graph} 
    ($n = 10^7, m = 5 \times 10^7$) ]{
    \hspace{-.015\linewidth}%
    \includegraphics[width=0.35\linewidth]{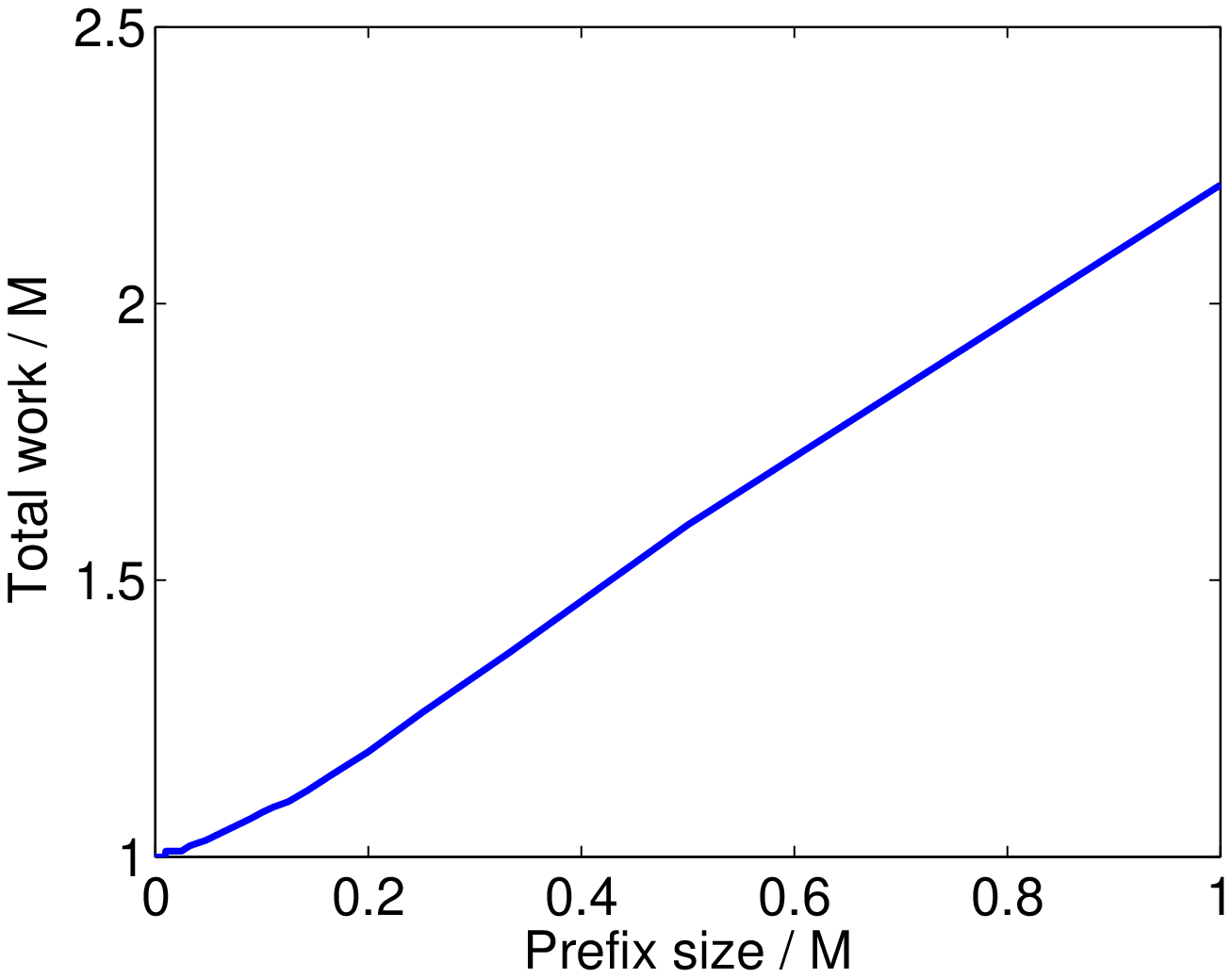}
    \hspace{-.03\linewidth}
    \label{fig:mmWorkRand}
  } \hfill
  \subfigure[Number of rounds vs. prefix size on a {\bf random graph} 
     ($n = 10^7, m = 5 \times 10^7$) in log-log scale]{
    \hspace{-.015\linewidth}%
    \includegraphics[width=0.35\linewidth]{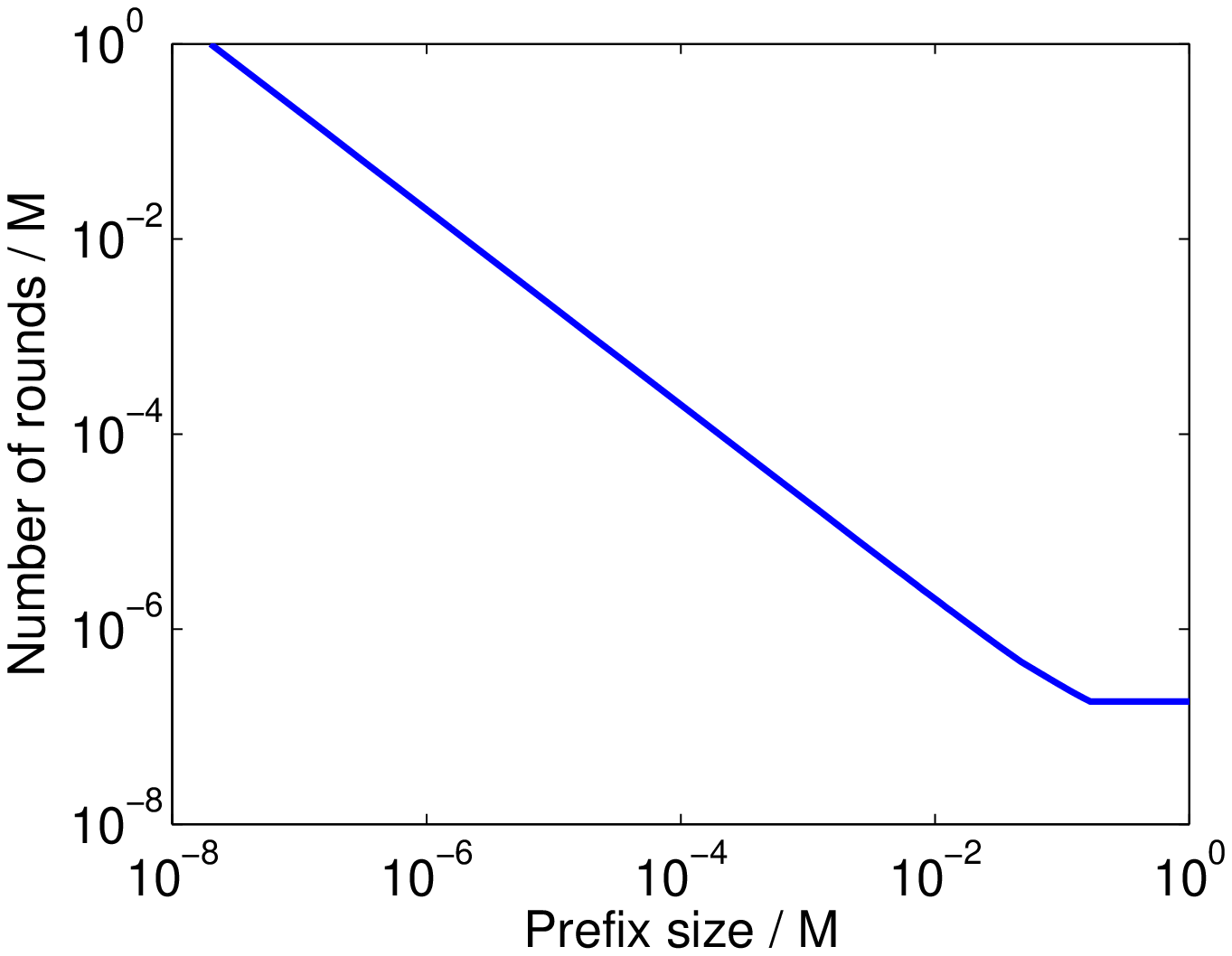}
    \hspace{-.03\linewidth}
    \label{fig:mmRoundsRand}
  } \hfill
  \subfigure[Running time (32 processors) vs. prefix size on a {\bf random graph} 
	   ($n = 10^7, m = 5 \times 10^7$) in log-log scale]{
    \hspace{-.015\linewidth}%
    \includegraphics[width=0.35\linewidth]{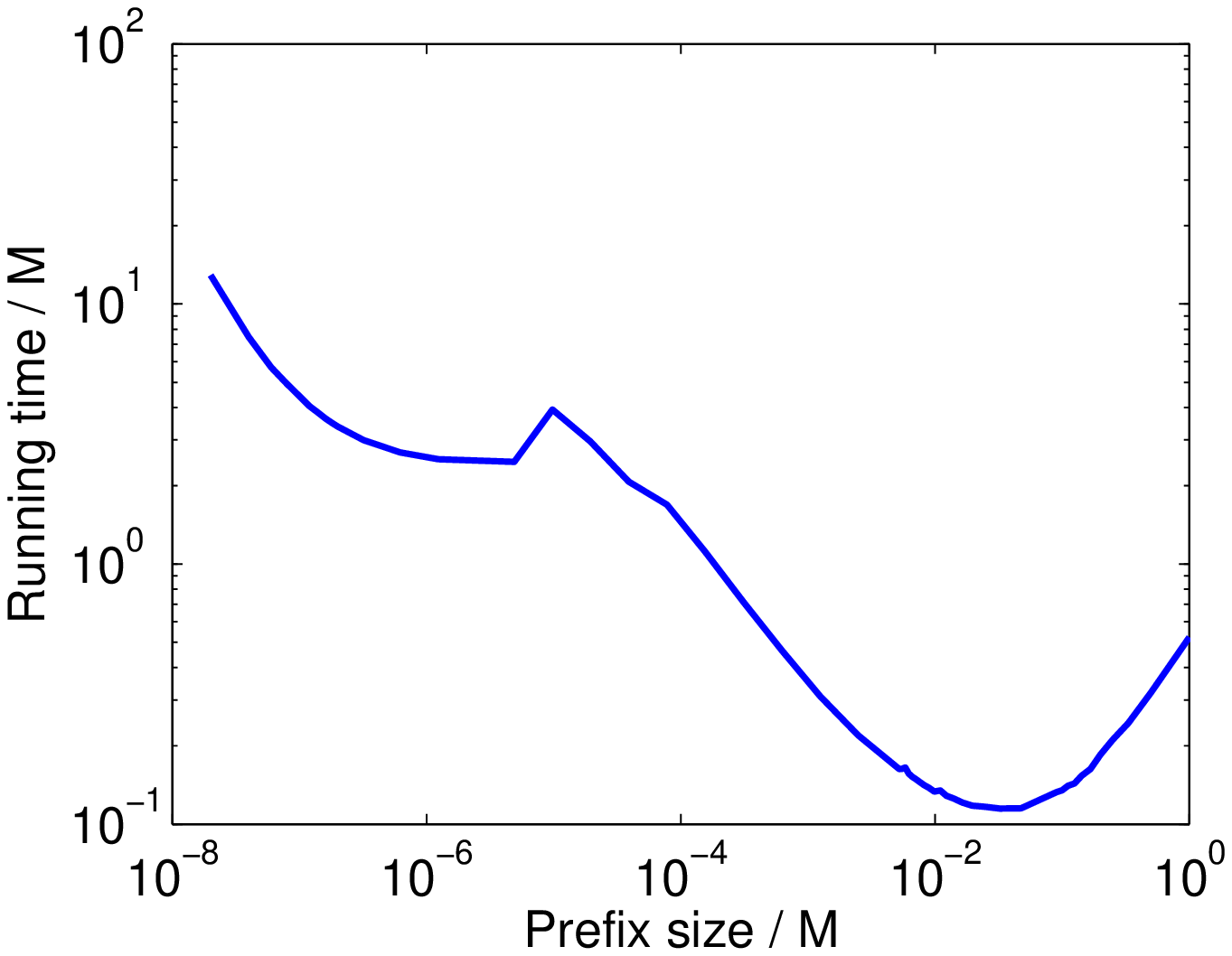}
    \hspace{-.03\linewidth}
    \label{fig:mmTimeRand}
  }
  \subfigure[Total work done vs. prefix size on a {\bf rMat graph} ($n = 2^{24}, m = 5 \times 10^7$)]{
    \hspace{-.015\linewidth}%
    \includegraphics[width=0.35\linewidth]{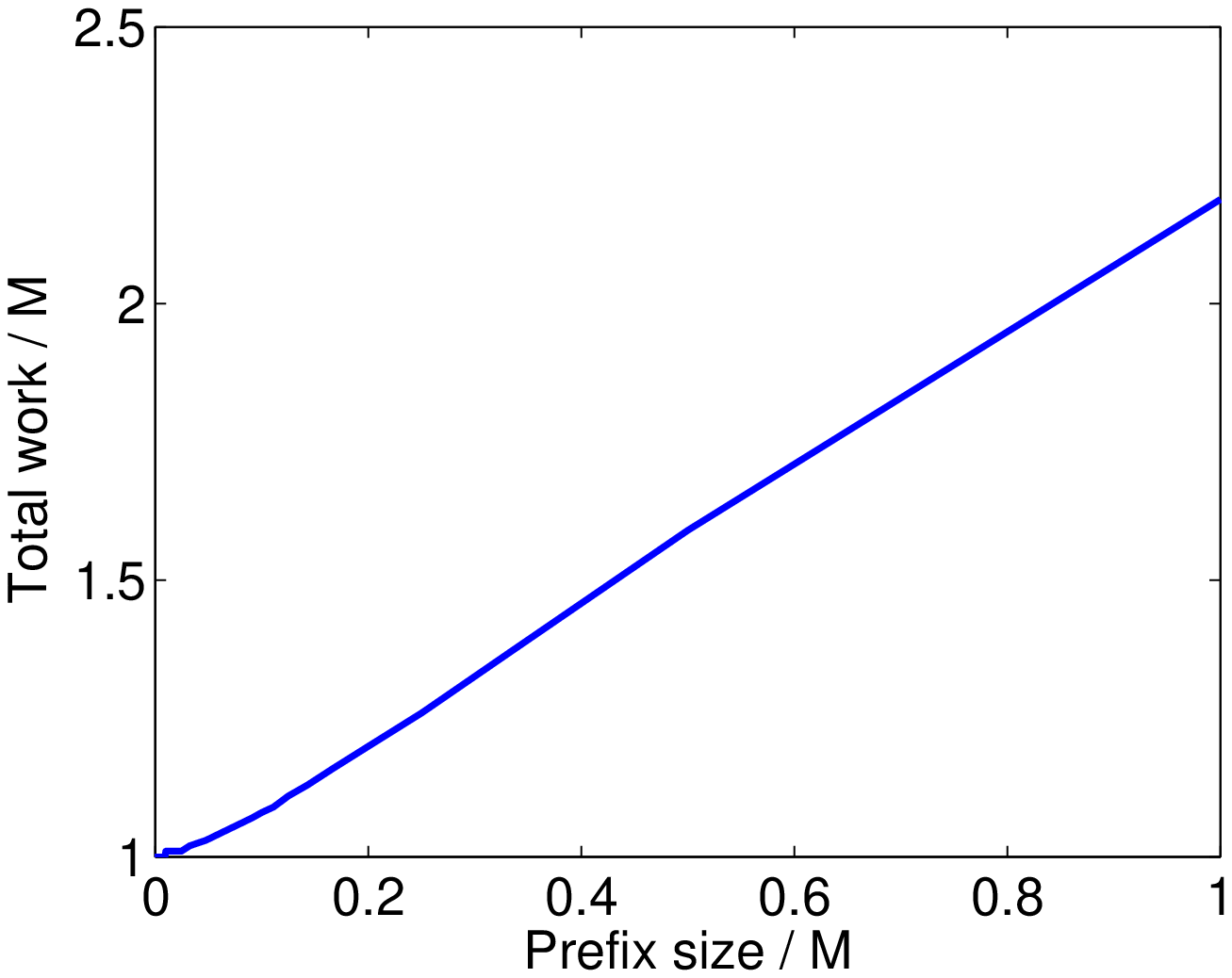}
    \hspace{-.03\linewidth}
    \label{fig:mmWorkRMat}
  } \hfill
  \subfigure[Number of rounds vs. prefix size on a {\bf rMat graph} ($n = 2^{24}, m = 5 \times 10^7$) in log-log scale]{
    \hspace{-.015\linewidth}%
    \includegraphics[width=0.35\linewidth]{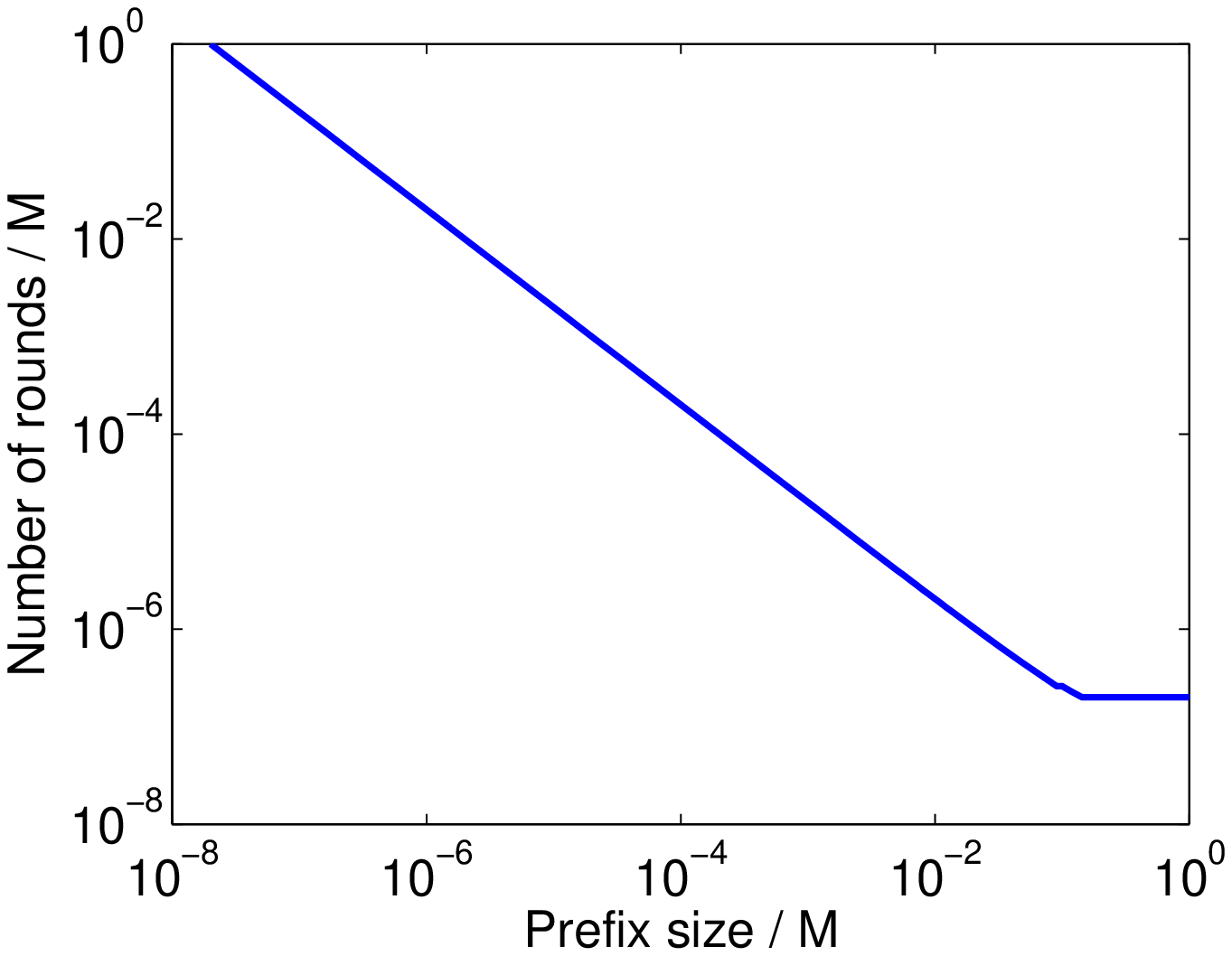}
    \hspace{-.03\linewidth}
    \label{fig:mmRoundsRMat}
  } \hfill
  %
  %
  \subfigure[Running time (32 processors) vs. prefix size on a {\bf rMat graph} ($n = 2^{24}, m = 5 \times 10^7$) in log-log scale]{
    \hspace{-.015\linewidth}%
    \includegraphics[width=0.35\linewidth]{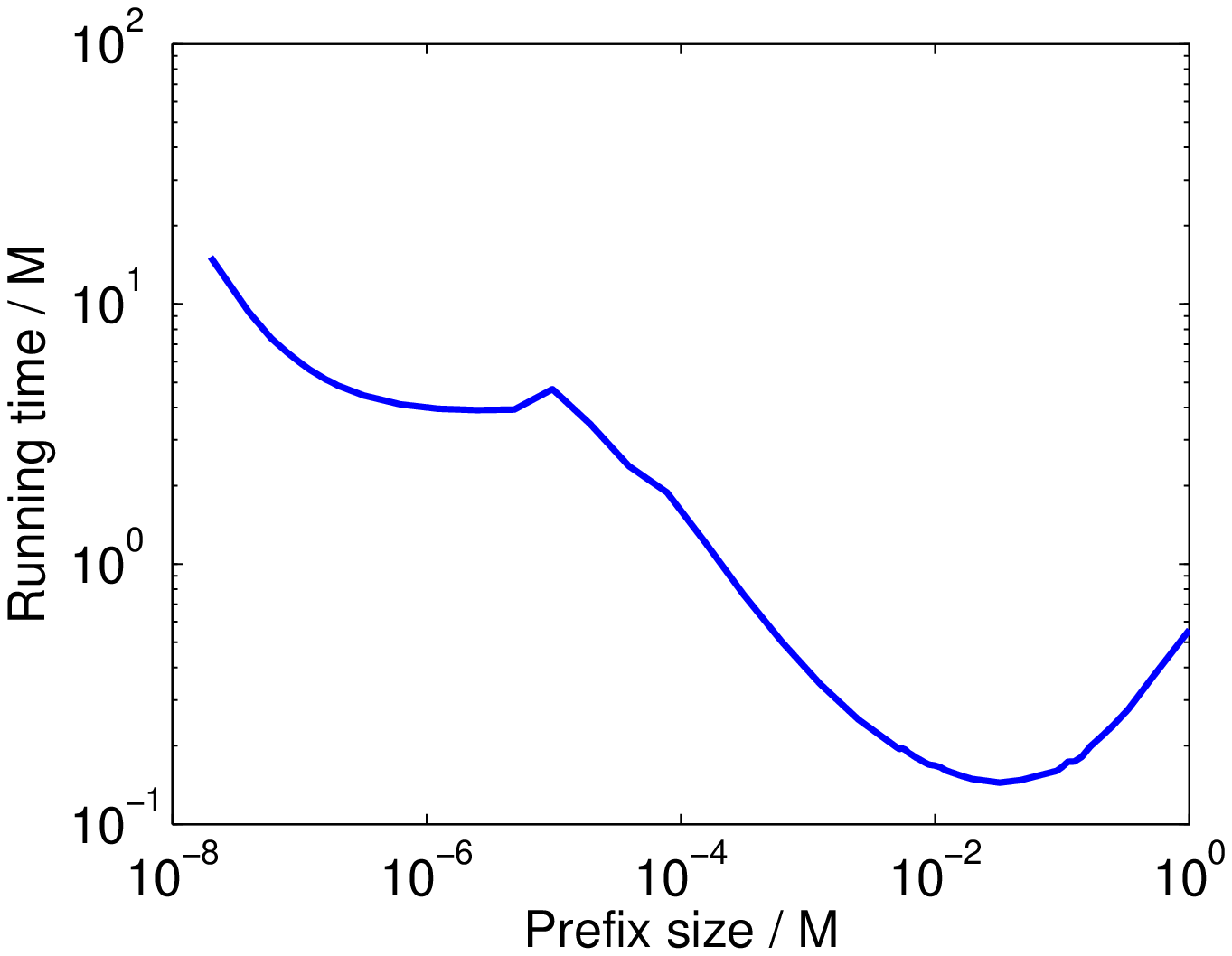}
    \hspace{-.03\linewidth}
    \label{fig:mmTimeRMat}
  }
  \caption{Plots showing the tradeoff between various properties and the prefix size in maximal matching.}
\end{figure*}

\begin{figure*}[h]
  \centering
  \subfigure[Running time vs. number of threads on a {\bf random graph} 
    ($n = 10^7, m = 5 \times 10^7$) in log-log scale]{
    \hspace{-.015\linewidth}
    \includegraphics[width=0.5\linewidth]{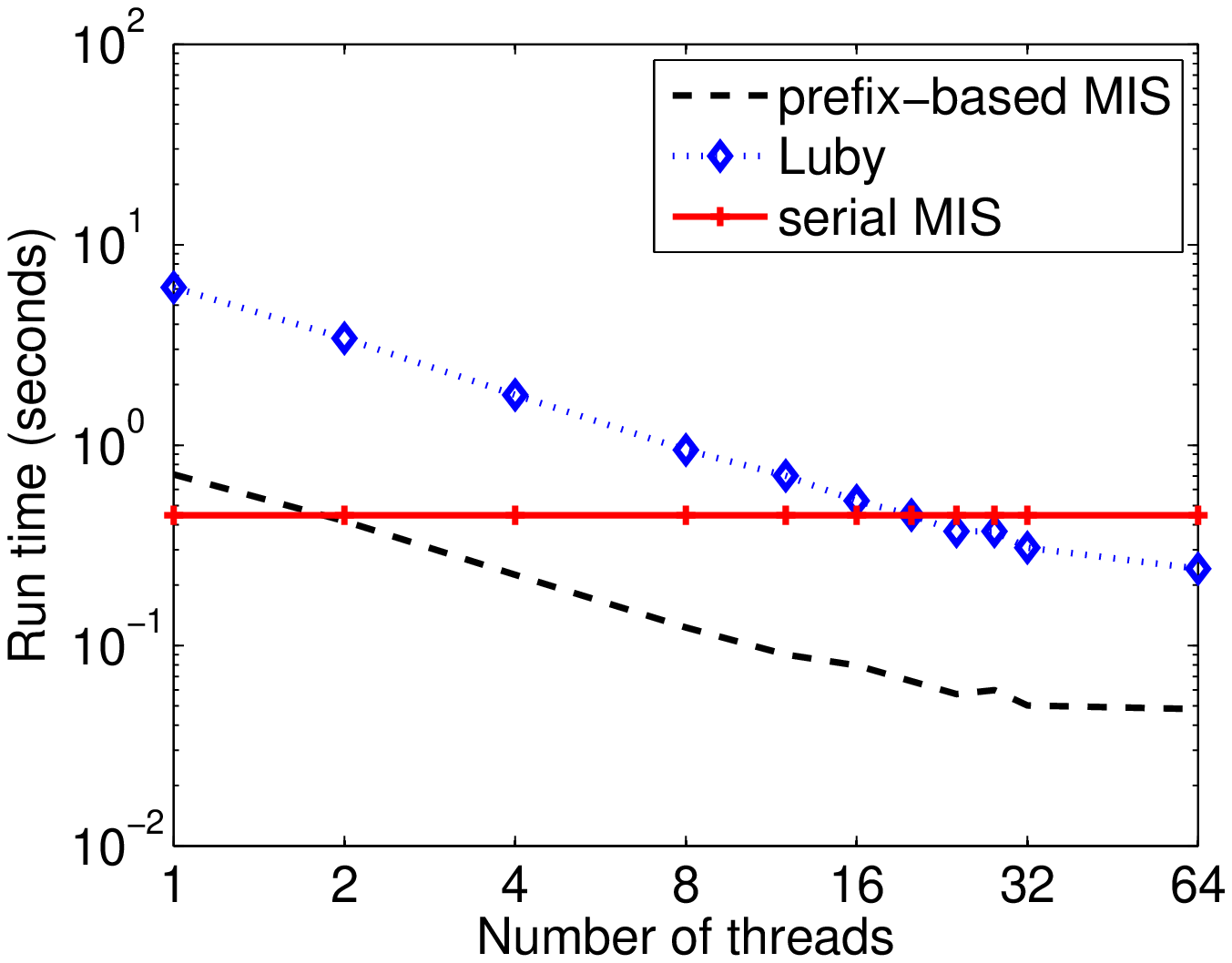}
    \hspace{-.015\linewidth}
    \label{fig:misPrefixVsProcsRand}
  } \hfill
  \subfigure[Running time vs. number of threads on a {\bf rMat graph} ($n = 2^{24}, m = 5 \times 10^7$) in log-log scale]{
    \hspace{-.015\linewidth}%
    \includegraphics[width=0.5\linewidth]{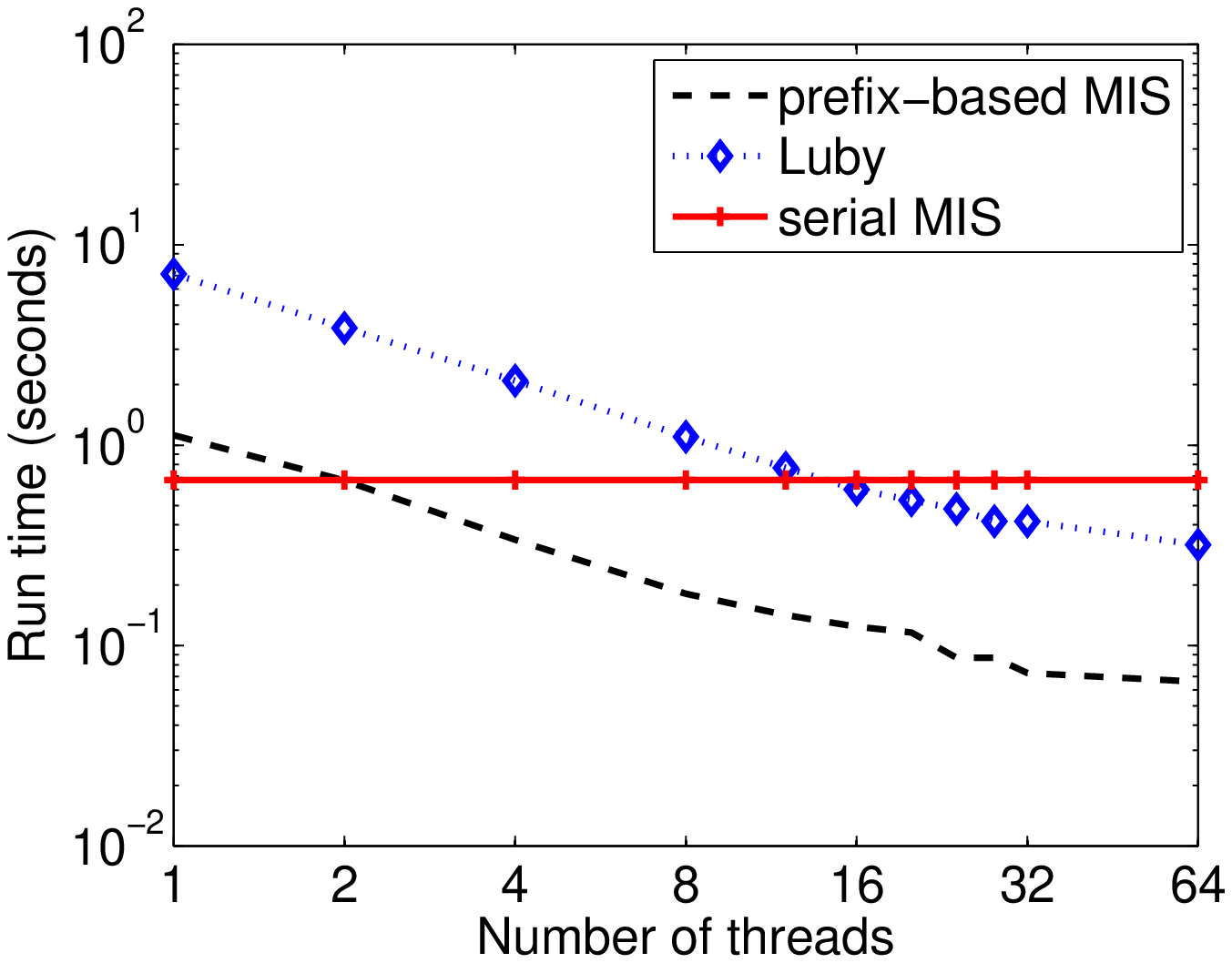}
    \hspace{-.015\linewidth}
    \label{fig:misPrefixVsProcsRMat}
  } \hfill
  \caption{Plots showing the running time vs. number of threads for the different MIS algorithms on a 32-core machine (with hyper-threading).}
\end{figure*}

\begin{figure*}[h]
  \centering
  \subfigure[Running time vs. number of threads on a {\bf random graph} 
    ($n = 10^7, m = 5 \times 10^7$) in log-log scale]{
    \hspace{-.015\linewidth}
    \includegraphics[width=0.5\linewidth]{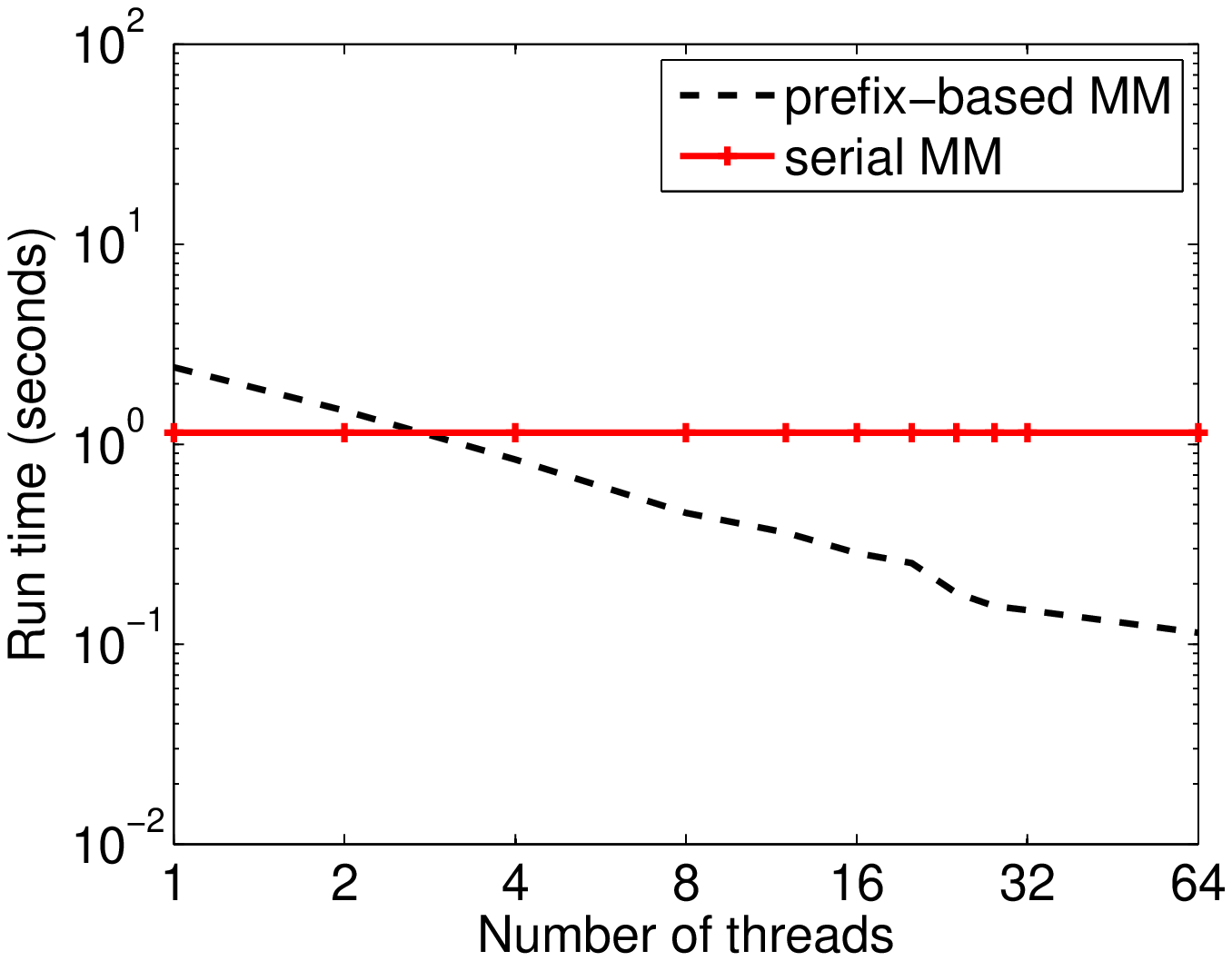}
    \hspace{-.015\linewidth}
    \label{fig:mmPrefixVsProcsRand}
  } \hfill
  \subfigure[Running time vs. number of threads on a {\bf rMat graph} ($n = 2^{24}, m = 5 \times 10^7$) in log-log scale]{
    \hspace{-.015\linewidth}%
    \includegraphics[width=0.5\linewidth]{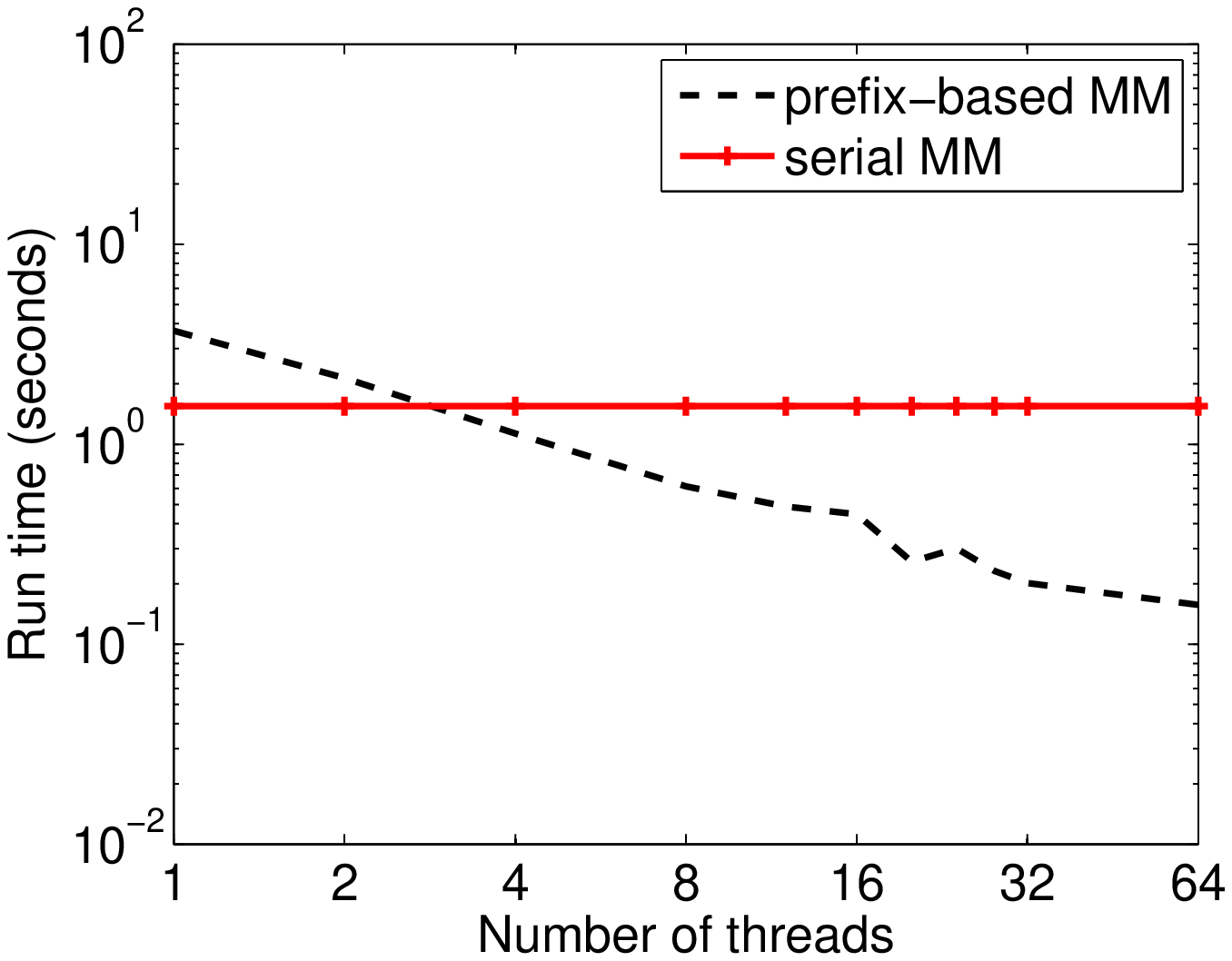}
    \hspace{-.015\linewidth}
    \label{fig:mmPrefixVsProcsRMat}
  } \hfill
  \caption{Plots showing the running time vs. number of threads for the different MM algorithms on a 32-core machine (with hyper-threading).}
\end{figure*}
%



\bibliographystyle{plain}
\bibliography{ref}


%
%
%
%

\end{document}